\documentclass[10pt]{amsart}
\pdfoutput=1
\usepackage{amsmath,amssymb,amsfonts,bussproofs}
\usepackage{hyperref}
\usepackage[all]{hypcap} 
\usepackage{cmap}
\usepackage[cp1251]{inputenc} 
\usepackage[russian,english]{babel}
\usepackage[mathscr]{eucal}

\newtheorem{lemma}{Lemma}
\newtheorem{theorem}{Theorem}
\newtheorem{proposition}{Proposition}

\newcommand{\LukA}{\textnormal{\L}\forall}
\newcommand{\RPLA}{\textnormal{RPL}\forall}
\newcommand{\GLukA}{\textnormal{G\L}\forall}
\newcommand{\GoneLukA}{\textnormal{G}^1\textnormal{\L}\forall}
\newcommand{\GtwoLukA}{\textnormal{G}^2\textnormal{\L}\forall}
\newcommand{\GthreeLukA}{\textnormal{G}^3\textnormal{\L}\forall}
\newcommand{\TthreeLukA}{\textnormal{T}^3\textnormal{\L}\forall} 
\newcommand{\RePl}[3]{[#1]^{#2}_{#3}} 
\newcommand{\SpV}[1]{\mathfrak{#1}}

\begin{document}
\thispagestyle{empty}

\title[A repetition-free hypersequent calculus for RPL$\forall$]
{A repetition-free hypersequent calculus for first-order rational Pavelka logic}
\author[A. S. Gerasimov]{Alexander S. Gerasimov}
\email{\href{mailto:alexander.s.gerasimov@ya.ru}
       {\texttt{alexander.s.gerasimov@ya.ru}}}

\maketitle

{\small
\begin{quote}
\noindent{\bf Abstract. } 
We present a hypersequent calculus $\text{G}^3\text{\L}\forall$
for first-order infinite-valued {\L}ukasiewicz logic and
for an extension of it, first-order rational Pavelka logic; 
the calculus is intended for bottom-up proof search.
In $\text{G}^3\text{\L}\forall$,
there are no structural rules, all the rules are invertible,
and designations of multisets of formulas are not repeated 
in any premise of the rules.
The calculus $\text{G}^3\text{\L}\forall$ proves 
any sentence that is provable in at least one of 
the previously known hypersequent calculi for the given logics.
We study proof-theoretic properties of $\text{G}^3\text{\L}\forall$
and thereby provide foundations for proof search algorithms.
\medskip

\noindent{\bf Keywords:} 
many-valued logic, mathematical fuzzy logic, 
first-order infinite-valued {\L}ukasiewicz logic,
first-order rational Pavelka logic, 
proof theory, hypersequent calculus, proof search.

\end{quote}
}

\section{Introduction}

First-order infinite-valued {\L}ukasiewicz logic $\LukA$ and 
an extension of it by rational truth constants,
first-order rational Pavelka logic $\RPLA$, 
are among the fundamental fuzzy logics \cite{Hajek1998, HMFL2011, HMFL2015}
and are considered in the given paper from the standpoint of proof search.

Hilbert-type calculi for the logics under consideration are widely used
(see, e.g., \cite{Hajek1998, HMFL2011}), 
but such calculi are unfit for bottom-up proof search.
For $\LukA$, we also know the hypersequent calculus $\GLukA$ 
\cite{BaazMetcalfe2010, MOG2009} 
with structural rules, which make it unsuitable for bottom-up proof search.

On the basis of the calculus $\GLukA$ from \cite{BaazMetcalfe2010, MOG2009} 
and tableau calculi from \cite{Ger2016}, in \cite{Ger2017}
we introduced hypersequent calculi $\GoneLukA$ and $\GtwoLukA$ 
for the logic $\RPLA$ and hence for $\LukA$. 
The calculi $\GoneLukA$ and $\GtwoLukA$ do not have structural rules;
the latter is a noncumulative variant of the former, which is cumulative,
i.e., preserves the conclusion of each inference rule in its premises.
Any $\GLukA$-provable sentence is provable in $\GoneLukA$;
and any prenex $\LukA$-sentence is provable or unprovable in 
$\GLukA$, $\GoneLukA$, and $\GtwoLukA$ simultaneously.
Also in \cite{Ger2017}, a family of proof search algorihms is described;
given a prenex $\GtwoLukA$-provable sentence, such an algorithm constructs 
some proof for it in a tableau modification of the calculus $\GtwoLukA$.

A defect of $\GtwoLukA$  
(which does not appear in proving prenex sentences)  
is that designations of multisets of formulas are repeated
in each premise of two quantifier rules.
The defect causes repeating some work during bottom-up proof search
and prevented us from establishing desirable proof-theoretic properties 
for the calculus $\GtwoLukA$, in particular, invertibility 
of one of its rules.

In the present paper, we introduce a noncumulative hypersequent calculus
$\GthreeLukA$ for the logic $\RPLA$.
There are no structural rules in the calculus; and
designations of multisets of formulas are not repeated 
in any premise of its rules.
The last feature of the calculus allows us to call it and each of its rules
repetition-free.

This paper is organized as follows.
In the rest of this section, we define the syntax and semantics
of the logics $\LukA$ and $\RPLA$, as well as some notation.
In Section \ref{Sec:GthreeSound}, 
we formulate the calculus $\GthreeLukA$ and prove its soundness.
In Section \ref{Sec:Invert}, 
we establish the invertibility of all the rules of $\GthreeLukA$
and show that any $\GoneLukA$-provable sentence is provable in $\GthreeLukA$.
In Section \ref{Sec:Transform}, 
we investigate transformations of $\GthreeLukA$-proofs
according to proof search tactics and 
thereby provide foundations for various proof search algorithms.
In Section \ref{Sec:MidHS}, 
we prove the mid-hypersequent theorem for $\GthreeLukA$;
show that any prenex $\RPLA$-sentence is provable or unprovable in 
$\GoneLukA$, $\GtwoLukA$, and $\GthreeLukA$ simultaneously;
and establish undecidability of $\GthreeLukA$. 

Let us describe the syntax and semantics of the logics under consideration.
We fix an arbitrary signature, which may contain predicate and function
symbols of any nonnegative arities.

\emph{Terms} are defined in the standard manner.
\emph{Atomic} $\LukA$- and $\RPLA$-\emph{formulas} are predicate symbols
with argument terms, as well as truth constants: 
in $\LukA$, the only truth constant $\bar{0}$;
and in $\RPLA$, truth constants $\bar{r}$ for all rational numbers $r \in [0,1]$ 
(where $[0,1]$ is an interval of real numbers).
$\LukA$- and $\RPLA$-\emph{formulas} are built as usual from
atomic $\LukA$- and $\RPLA$-formulas, respectively, 
using the \emph{logical symbols}: 
the binary connective $\to$ and the quantifiers $\forall, \exists$.

The notion of an \emph{interpretation} ${\langle \mathcal{D}, \mu \rangle}$
differs from the classical notion of the same name only in that
the map $\mu$ takes each $n$-ary predicate symbol $P$ to a predicate
${\mu(P): \mathcal{D}^n \to [0,1]}$.
Given an interpretation ${\langle \mathcal{D}, \mu \rangle}$,
a \emph{valuation} is a map of the set of all (individual) variables to
the domain $\mathcal{D}$ of the interpretation.
For a valuation $\nu$, a variable $x$, and ${d\in\mathcal{D}}$, by
$\nu[x\mapsto d]$ we denote the valuation that may differ from $\nu$ only on $x$
and meets the condition \,${\nu[x\mapsto d](x) = d}$.

The \emph{value $|t|_{M,\nu}$ of a term $t$} under an interpretation $M$ and 
a valuation $\nu$ is defined as usual.
The \emph{truth value $|C|_{M,\nu}$ of an $\RPLA$-formula $C$} 
under an interpretation ${M = \langle \mathcal{D}, \mu \rangle}$ and 
a valuation $\nu$ is defined as follows:

(1)~${|\bar{r}|_{M,\nu}=r}$;

(2)~${|P(t_1,\dots,t_n)|_{M,\nu} = \mu(P)(|t_1|_{M,\nu},\dots,|t_n|_{M,\nu})}$
for an $n$-ary predicate symbol $P$ and terms $t_1,\dots,t_n$;

(3)~${|A \to B|_{M,\nu} = \min(1-|A|_{M,\nu}+|B|_{M,\nu}, \,1)}$;

(4)~${|\forall x A|_{M,\nu} = \inf_{d\in\mathcal{D}} |A|_{M,\nu[x\mapsto d]}}$;

(5)~${|\exists x A|_{M,\nu} = \sup_{d\in\mathcal{D}} |A|_{M,\nu[x\mapsto d]}}$.

An $\RPLA$-formula $C$ is called \emph{valid} 
(also written ${\vDash C}$) if ${|C|_{M,\nu} = 1}$ 
for every interpretation $M$ and every valuation~$\nu$.

Note that the logic $\RPLA$ allows us to express partial truth of statements
in the following way \cite[Section 3.3]{Hajek1998}. 
Given a rational number ${r \in [0,1]}$ and an $\RPLA$-formula $A$, we have: 
\:(a)~for a fixed interpretation $M$ and a fixed valuation~$\nu$:
\,${r \leqslant |A|_{M,\nu}}$\, iff \,${|\bar{r} \to A|_{M,\nu} = 1}$;
\:(b)~${r \leqslant |A|_{M,\nu}}$ for every interpretation $M$ and every
valuation~$\nu$\, iff \,${\vDash (\bar{r} \to A)}$.

The result of substituting a term $t$ for all free occurrences of 
a variable $x$ in an $\RPLA$-formula $A$ is denoted by ${\RePl{A}{x}{t}}$.
By a proof in a calculus considered below, we understand a proof tree.
The provability of an object $\alpha$ in a calculus $\mathfrak{C}$ is denoted
by \,${\vdash_\mathfrak{C} \alpha}$.

The calculi $\GLukA$, $\GoneLukA$, and $\GtwoLukA$ are formulated in 
Sections 2.1, 2.2, and 3.1 of the paper \cite{Ger2017}.

\section{The repetition-free calculus $\GthreeLukA$ and its soundness} 
\label{Sec:GthreeSound}

Basically, we obtain $\GthreeLukA$ from the calculus $\GtwoLukA$, 
defined in \cite[Section 3.1]{Ger2017}, 
by replacing its rules ${(\forall\Rightarrow)^2}$ and 
${(\Rightarrow\exists)^2}$ with repetition-free ones. 

We will work with a fixed signature that includes 
a countable set of nullary function symbols called \emph{parameters}.

Semipropositional variables defined in \cite[Section 2.2]{Ger2017}
are now called \emph{semipropositional variables of type}~1 and
are denoted by $\SpV{p}, \SpV{p}_0, \SpV{p}_1,\ldots$~.
In addition to them, we introduce a countable set of new words
called \emph{semipropositional variables of type}~0 and
denoted by $\SpV{q}, \SpV{q}_0, \SpV{q}_1, \ldots$~.

The way of obtaining the new repetition-free rules and the role of
semipropositional variables used in them will be revealed in the proofs of 
Lemmas \ref{GLem:CorrSemInvGthreeRules} and \ref{GLem:SemGtwoGthreeRules} below.

We define an \emph{hs-interpretation} as an interpretation
$\langle \mathcal{D}, \mu \rangle$ in which the map $\mu$ additionally
takes each semipropositional variable of type~0
to a real number from $[0, +\infty)$ and 
each semipropositional variable of type~1 to a real number from $(-\infty, 1]$.

Taking into account that by semipropositional variables
we now mean semipropositional variables of both types,
the following definitions and abbreviations given in \cite[Section 2.2]{Ger2017}
preserve their forms:
the definitions of an \emph{atom}, a \emph{formula}, a \emph{sequent}, 
a \emph{hypersequent}, a \emph{member} of a sequent, an \emph{atomic} sequent;
the abbreviations $|\Gamma|_{M,\nu}$ and ${|\Gamma\Rightarrow\Delta|_{M,\nu}}$
(for a finite multiset $\Gamma$ of formulas, a sequent
${\Gamma\Rightarrow\Delta}$, an hs-interpretation $M$, and a valuation $\nu$); 
the definitions of a \emph{true} sequent 
(under an hs-interpretation and a valuation), 
a \emph{valid} hypersequent (with the abbreviation \,${\vDash \mathcal{H}}$ 
for such a hypersequent $\mathcal{H}$),
\emph{sound} and \emph{semantically invertible} rules.

In the sequel, let the letters $A$, $B$, and $C$ denote any $\RPLA$-formulas, 
$F$ a formula,
$\Gamma$, $\Delta$, $\Pi$, and $\Sigma$ any finite multisets of formulas, 
$S$ a sequent,
$\mathcal{G}$ and $\mathcal{H}$ any hypersequents, 
$t$ a closed term,
$a$ a parameter;  
all these letters may have subscripts.

The inference rules of the calculus $\GthreeLukA$ are:
\medskip
\begin{center}
$\dfrac{\mathcal{G}\,|\, \Gamma, \SpV{p} \Rightarrow \Delta\,|\,B \Rightarrow \SpV{p}, A}
 {\mathcal{G}\,|\,\Gamma, A\to B \Rightarrow \Delta}~(\to\Rightarrow)^3$, \qquad
$\dfrac{\mathcal{G}\,|\, \Gamma \Rightarrow \Delta;  \quad  \mathcal{G}\,|\, \Gamma,A \Rightarrow B,\Delta}
 {\mathcal{G}\,|\,\Gamma \Rightarrow A\to B, \Delta}~(\Rightarrow\to)^3$,\\[\medskipamount]
$\dfrac{\mathcal{G}\,|\, \Gamma, \SpV{p} \Rightarrow\Delta\,|\,\forall x A \Rightarrow \SpV{p}\,|\,\RePl{A}{x}{t} \Rightarrow \SpV{p}}
 {\mathcal{G}\,|\, \Gamma, \forall x A \Rightarrow\Delta}~(\forall\Rightarrow)^3$, \qquad
$\dfrac{\mathcal{G}\,|\, \Gamma \Rightarrow \RePl{A}{x}{a}, \Delta}
 {\mathcal{G}\,|\,\Gamma \Rightarrow \forall x A, \Delta}~(\Rightarrow\forall)^3$,\\[\medskipamount]
$\dfrac{\mathcal{G}\,|\, \Gamma \Rightarrow \SpV{q}, \Delta\,|\,\SpV{q} \Rightarrow \exists x A\,|\,\SpV{q} \Rightarrow \RePl{A}{x}{t}}
 {\mathcal{G}\,|\, \Gamma \Rightarrow \exists x A, \Delta}~(\Rightarrow\exists)^3$, \qquad
$\dfrac{\mathcal{G}\,|\, \Gamma, \RePl{A}{x}{a} \Rightarrow\Delta}
 {\mathcal{G}\,|\,\Gamma, \exists x A \Rightarrow\Delta}~(\exists\Rightarrow)^3$,
\end{center}
\medskip
where $\SpV{p}$ (resp. $\SpV{q}$) does not occur in the conclusion of 
${(\to\Rightarrow)^3}$ or ${(\forall\Rightarrow)^3}$
(resp. ${(\Rightarrow\exists)^3}$) and is called 
the \emph{proper semipropositional variable} 
of an application of the corresponding rule;
$t$ is called the \emph{proper term} of an application of 
${(\forall\Rightarrow)^3}$ or ${(\Rightarrow\exists)^3}$;
$a$ does not occur in the conclusion of ${(\Rightarrow\forall)^3}$ or
${(\exists\Rightarrow)^3}$ and is called the \emph{proper parameter}
of an application of the corresponding rule.

An \emph{axiom} of the calculus $\GthreeLukA$ is an arbitrary hypersequent 
in which, for any hs-interpretation $M$ and any valuation $\nu$,
there exists an atomic sequent that is true under $M$ and~$\nu$.
Note that axioms of $\GthreeLukA$ can be recognized in much the same way as
described in \cite[Section 4.2]{Ger2017}. 

A \emph{$\GthreeLukA$-proof of} (\emph{for}) \emph{an $\RPLA$-formula $A$}
is a $\GthreeLukA$-proof of the hypersequent \,${\Rightarrow A}$.

The following definitions and notation 
given at the end of \cite[Section 2.2]{Ger2017} 
carry over to the calculus $\GthreeLukA$:
the definitions of a \emph{backward application} 
(or a \emph{counter-application}) of a rule,
a \emph{principal} formula (sequent) occurrence,
and an \emph{ancestor} of a formula (sequent) occurrence;
and the convention for designating a proof of a hypersequent over
an occurrence of it in a proof tree.

Suppose $D$ is a $\GthreeLukA$-proof, and $\mathcal{G}$ is a hypersequent.
To get a $\GthreeLukA$-proof $D'$, in $D$, we rename
all proper semipropositional variables occurring in $\mathcal{G}$
and all proper parameters occurring in $\mathcal{G}$ to new distinct ones. 
Then by $D\,|\,\mathcal{G}$ we denote the $\GthreeLukA$-proof
obtained from $D'$ by appending ``$|\,\mathcal{G}$'' to 
each node hypersequent of $D'$.
(For our use of such an abbreviation, 
it does not matter how we perform renaming above.)

\begin{lemma}  \label{GLem:CorrSemInvGthreeRules}
Each inference rules of the calculus $\GthreeLukA$ is
sound and semantically invertible.
\end{lemma}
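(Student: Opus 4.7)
The plan is to verify soundness and semantic invertibility separately for each of the six rules. For each, I would fix an arbitrary hs-interpretation $M = \langle\mathcal{D},\mu\rangle$ and valuation $\nu$, set $\alpha = |\Delta|_{M,\nu} - |\Gamma|_{M,\nu}$ and (when $A,B$ occur) $\beta = |B|_{M,\nu} - |A|_{M,\nu}$, and rewrite the truth of each sequent as a linear inequality in the value of the proper semipropositional variable. Soundness and semantic invertibility then reduce to complementary case analyses on this value, tied together by the semantic identity for the principal formula.

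For the propositional rules, I would first dispose of $(\Rightarrow\to)^3$: its two premises directly cover the two cases in $|A\to B|_{M,\nu} = \min(1 - |A|_{M,\nu} + |B|_{M,\nu},\, 1)$, so no semipropositional variable is needed. For $(\to\Rightarrow)^3$, the two new sequents are true iff $\mu(\SpV{p}) \leqslant \alpha$ and iff $\mu(\SpV{p}) \geqslant \beta$, respectively. Semantic invertibility then follows from a short case split on $\mu(\SpV{p})$ relative to $\alpha$, using $\mu(\SpV{p}) \leqslant 1$ together with the conclusion's truth. For soundness, I would set $\mu(\SpV{p}) := |A\to B|_{M,\nu}$; this value lies in $(-\infty,1]$, which is exactly the range of a type-1 variable, and transfers the premise's truth back to the conclusion.

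For the quantifier rules with semipropositional variables, $(\forall\Rightarrow)^3$ and $(\Rightarrow\exists)^3$, the same template applies but with $|\forall x A|$ (an infimum) or $|\exists x A|$ (a supremum) playing the role of $|A\to B|$. For $(\forall\Rightarrow)^3$, the three new sequents are true iff $\mu(\SpV{p}) \leqslant \alpha$, iff $|\forall x A|_{M,\nu} \leqslant \mu(\SpV{p})$, and iff $|\RePl{A}{x}{t}|_{M,\nu} \leqslant \mu(\SpV{p})$, respectively. Semantic invertibility needs only the first two: either $\mu(\SpV{p}) \leqslant \alpha$ makes the first true, or $|\forall x A| \leqslant \alpha < \mu(\SpV{p})$ makes the second true. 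The third sequent becomes decisive for soundness: assuming the conclusion fails (so $|\forall x A| > \alpha$), I would pick $\mu(\SpV{p})$ strictly between $\alpha$ and $|\forall x A|$ and invoke $|\RePl{A}{x}{t}|_{M,\nu} \geqslant |\forall x A|_{M,\nu}$ for every closed term $t$ to falsify it, contradicting premise validity. The rule $(\Rightarrow\exists)^3$ is the mirror image, with $\SpV{q}$ of type~0 providing the matching lower bound for suprema. The eigenvariable rules $(\Rightarrow\forall)^3$ and $(\exists\Rightarrow)^3$ are routine: semantic invertibility follows from $|\forall x A|_{M,\nu} \leqslant |\RePl{A}{x}{a}|_{M,\nu}$ (respectively, its dual), and soundness uses that, since the proper parameter $a$ does not occur in the conclusion, ranging $\mu(a)$ over $\mathcal{D}$ delivers the infimum (respectively, supremum).

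The main obstacle I anticipate is the type-matching in the rules $(\to\Rightarrow)^3$, $(\forall\Rightarrow)^3$, and $(\Rightarrow\exists)^3$. The argument depends on $|A\to B|_{M,\nu} \leqslant 1$ and $|\forall x A|_{M,\nu} \leqslant 1$ aligning with the upper bound on a type-1 variable, and on $|\exists x A|_{M,\nu} \geqslant 0$ aligning with the lower bound on a type-0 variable. These alignments are precisely why the two types of semipropositional variables are introduced in the calculus, as hinted in the remark just before the lemma.
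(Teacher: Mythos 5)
Your overall strategy---rewriting the truth of each new sequent as a linear inequality in the value of the proper semipropositional variable and then doing complementary case analyses---is sound and is in substance the computation the paper packages into its Lemma~\ref{GLem:SemGtwoGthreeRules}. But the specific witness you choose for the soundness of $(\to\Rightarrow)^3$ fails. Under $M[\SpV{p}\mapsto r],\nu$ the sequent $B \Rightarrow \SpV{p}, A$ is true exactly when $r$ is at least the threshold $1-|A|_{M,\nu}+|B|_{M,\nu}$ (truth of sequents is a non-strict inequality, as in assertion $(\widetilde{5})$ of Lemma~\ref{GLem:SemGtwoGthreeRules}), while $|A\to B|_{M,\nu}=\min(1-|A|_{M,\nu}+|B|_{M,\nu},\,1)$. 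Hence whenever $|B|_{M,\nu}\leqslant|A|_{M,\nu}$ your assignment $\mu(\SpV{p}):=|A\to B|_{M,\nu}$ lands exactly on that threshold and makes $B \Rightarrow \SpV{p}, A$ \emph{true}. The validity of the premise at that single hs-interpretation is then witnessed by this side sequent alone and transfers nothing to the conclusion; equivalently, in contrapositive form, an $M,\nu$ falsifying the conclusion does not extend to an hs-interpretation falsifying the premise, because $B \Rightarrow \SpV{p}, A$ remains true. The repair is the same density argument you correctly use for $(\forall\Rightarrow)^3$: assuming $\Gamma, A\to B \Rightarrow \Delta$ is false, choose $r$ \emph{strictly} between $|\Delta|_{M,\nu}-|\Gamma|_{M,\nu}+1$ and $1-|A|_{M,\nu}+|B|_{M,\nu}$; such an $r$ exists and automatically satisfies $r\leqslant 1$ because $|A\to B|_{M,\nu}\leqslant 1$, which is where the type-1 bound genuinely enters. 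This is precisely assertion (1) of Lemma~\ref{GLem:SemGtwoGthreeRules}, proved there ``by the density of the set of all real numbers.''

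Everything else in your plan is correct and amounts to the paper's argument unrolled: the paper derives the four rules $(\to\Rightarrow)^3$, $(\Rightarrow\to)^3$, $(\Rightarrow\forall)^3$, $(\exists\Rightarrow)^3$ from assertions (1)--(4) of Lemma~\ref{GLem:SemGtwoGthreeRules}, and handles $(\forall\Rightarrow)^3$ and $(\Rightarrow\exists)^3$ by factoring each into a repetition-full quantifier rule composed with a nonstandard density rule (assertions (5), (6), $(\widetilde{5})$, $(\widetilde{6})$), whereas you verify the three-component premises in one pass; the two are interchangeable. Your closing remark about why the two types of semipropositional variables are needed matches the role they play in the paper.
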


\begin{proof}
From assertions (1)--(4) of Lemma \ref{GLem:SemGtwoGthreeRules} stated below,
it follows that the rules ${(\to\Rightarrow)^3}$, ${(\Rightarrow\to)^3}$, 
${(\Rightarrow\forall)^3}$, and ${(\exists\Rightarrow)^3}$ 
are sound and semantically invertible.

Any application of the rule ${(\forall\Rightarrow)^3}$ can be represented
as two applications of the rules 
$$\dfrac{\mathcal{G}_0 \,|\, \Gamma_0, \forall x A \Rightarrow \Delta_0 \,|\, \Gamma_0, \RePl{A}{x}{t} \Rightarrow \Delta_0}
  {\mathcal{G}_0 \,|\, \Gamma_0, \forall x A \Rightarrow \Delta_0}~(\forall\Rightarrow)^2_0
  \quad \text{and} \quad
  \dfrac{\mathcal{G} \,|\, \Gamma, \SpV{p} \Rightarrow \Delta \,|\, B \Rightarrow \SpV{p}}
  {\mathcal{G} \,|\, \Gamma, B \Rightarrow \Delta}~(\text{den}_1),
$$
where $\SpV{p}$ does not occur in the conclusion of the last rule,
as follows:\footnote{
  We obtained the repetition-free rule ${(\forall\Rightarrow)^3}$ in this way.
  The rule ${(\forall\Rightarrow)^2_0}$ differs from the rule
  ${(\forall\Rightarrow)^2}$ of the calculus $\GtwoLukA$ only in that 
  semipropositional variables of type~0 may occur in 
  a premise and conclusion of ${(\forall\Rightarrow)^2_0}$.
  The rule $(\text{den}_1)$ is a nonstandard variant of the density rule,
  cf. \cite[Section 4.5]{MOG2009}. 
}
\begin{center}
\def\ScoreOverhang{0pt}
\AxiomC{$\mathcal{G}\,|\, \Gamma, \SpV{p} \Rightarrow\Delta\,|\,\forall x A \Rightarrow \SpV{p}\,|\,\RePl{A}{x}{t} \Rightarrow \SpV{p}$}
\RightLabel{$(\forall\Rightarrow)^2_0$}
\UnaryInfC{$\mathcal{G}\,|\,\Gamma, \SpV{p} \Rightarrow\Delta \,|\, \forall x A \Rightarrow \SpV{p}$}
\RightLabel{$(\text{den}_1)$.}
\UnaryInfC{$\mathcal{G}\,|\,\Gamma, \forall x A \Rightarrow\Delta$}
\DisplayProof
\end{center}
By assertion (5) of Lemma \ref{GLem:SemGtwoGthreeRules},
the rule ${(\forall\Rightarrow)^2_0}$ is sound;
and it is semantically invertible, since its premise includes its conclusion.
By assertion $(\widetilde{5})$ of Lemma~\ref{GLem:SemGtwoGthreeRules},
the rule $(\text{den}_1)$ is sound and semantically invertible.
So ${(\forall\Rightarrow)^3}$ is sound and semantically invertible.

Any application of the rule ${(\Rightarrow\exists)^3}$ can be represented
as two applications of the rules 
$$\dfrac{\mathcal{G}_0 \,|\, \Gamma_0 \Rightarrow \exists x A, \Delta_0 \,|\, \Gamma_0 \Rightarrow \RePl{A}{x}{t}, \Delta_0}
  {\mathcal{G}_0 \,|\, \Gamma_0 \Rightarrow \exists x A, \Delta_0}~(\Rightarrow\exists)^2_0
  \quad \text{and} \quad
  \dfrac{\mathcal{G} \,|\, \Gamma \Rightarrow \SpV{q}, \Delta \,|\, \SpV{q} \Rightarrow B}
  {\mathcal{G} \,|\, \Gamma \Rightarrow B, \Delta}~(\text{den}_0),
$$
where $\SpV{q}$ does not occur in the conclusion of the last rule, thus:
\begin{center}
\def\ScoreOverhang{0pt}
\AxiomC{$\mathcal{G}\,|\, \Gamma \Rightarrow \SpV{q}, \Delta \,|\, \SpV{q} \Rightarrow \exists x A \,|\, \SpV{q} \Rightarrow \RePl{A}{x}{t}$}
\RightLabel{$(\Rightarrow\exists)^2_0$}
\UnaryInfC{$\mathcal{G} \,|\, \Gamma \Rightarrow \SpV{q}, \Delta \,|\, \SpV{q} \Rightarrow \exists x A$}
\RightLabel{$(\text{den}_0)$.}
\UnaryInfC{$\mathcal{G} \,|\, \Gamma \Rightarrow \exists x A, \Delta$}
\DisplayProof
\end{center}
Then from assertions $(6)$ and $(\widetilde{6})$ 
of Lemma \ref{GLem:SemGtwoGthreeRules}, it follows that 
${(\Rightarrow\exists)^3}$ is sound and semantically invertible.
\end{proof}

For an hs-interpretation $M$, 
a semipropositional variable $\SpV{r}$ of type~0 (resp. type~1), 
and a real number ${r \in [0, +\infty)}$ (resp. ${r \in (-\infty, 1]}$),
by ${M[\SpV{r}\mapsto r]}$ we denote the hs-inter\-pre\-ta\-tion that
interprets $\SpV{r}$ by $r$ and does not differ from $M$ in any other respect.

\begin{lemma}  \label{GLem:SemGtwoGthreeRules}
Let \,$\Gamma$ and $\Delta$ be finite multisets of formulas;
$A$ and $B$ be $\RPLA$-for\-mulas;
$y$ be a variable not occurring in $\Gamma$, $\Delta$, $A$;
$\SpV{p}$ and $\SpV{q}$ be semipropositional variables 
\textnormal{(}of type~1 and type~0, respectively\textnormal{)} 
not occurring in $\Gamma$, $\Delta$, $A$, $B$;
$M$ be an hs-interpretaton with domain $\mathcal{D}$;
and $\nu$ be a valuation. Then:

$(1)$
${|\Gamma, A\to B \Rightarrow \Delta|_{M,\nu} \geqslant 0}$  iff,\,
for every ${r \in (-\infty, 1]}$, at least one of the inequalities
\,${|\Gamma, \SpV{p} \Rightarrow \Delta|_{M[\SpV{p}\mapsto r],\, \nu} \geqslant 0}$ 
or 
\,${|B \Rightarrow \SpV{p}, A|_{M[\SpV{p}\mapsto r],\, \nu} \geqslant 0}$
holds;

$(2)$
${|\Gamma \Rightarrow A\to B, \Delta|_{M,\nu} \geqslant 0}$  iff
\,${|\Gamma \Rightarrow \Delta|_{M, \nu} \geqslant 0}$ 
and \,${|\Gamma,A \Rightarrow B,\Delta|_{M, \nu} \geqslant 0}$;

$(3)$
${|\Gamma \Rightarrow \forall x A, \Delta|_{M,\nu} \geqslant 0}$  iff
\,${|\Gamma \Rightarrow \RePl{A}{x}{y}, \Delta|_{M, \nu[y\mapsto d]} \geqslant 0}$
for every ${d\in\mathcal{D}}$;

$(4)$
${|\Gamma, \exists x A \Rightarrow\Delta|_{M,\nu} \geqslant 0}$  iff
\,${|\Gamma, \RePl{A}{x}{y} \Rightarrow\Delta|_{M, \nu[y\mapsto d]} \geqslant 0}$
for every ${d\in\mathcal{D}}$;

$(5)$
${|\Gamma, \forall x A \Rightarrow\Delta|_{M,\nu} \geqslant 0}$ if
\,${|\Gamma, \RePl{A}{x}{y} \Rightarrow\Delta|_{M, \nu[y\mapsto d]} \geqslant 0}$
for some ${d\in\mathcal{D}}$;

$(6)$
${|\Gamma \Rightarrow \exists x A, \Delta|_{M,\nu} \geqslant 0}$ if
\,${|\Gamma \Rightarrow \RePl{A}{x}{y}, \Delta|_{M, \nu[y\mapsto d]} \geqslant 0}$
for some ${d\in\mathcal{D}}$;

$(\widetilde{5})$
${|\Gamma, B \Rightarrow \Delta|_{M,\nu} \geqslant 0}$  iff,\,
for every ${r \in (-\infty, 1]}$, 
at least one of the inequalities
\,${|\Gamma, \SpV{p} \Rightarrow \Delta|_{M[\SpV{p}\mapsto r],\, \nu} \geqslant 0}$ 
or 
\,${|B \Rightarrow \SpV{p}|_{M[\SpV{p}\mapsto r],\, \nu} \geqslant 0}$
holds;

$(\widetilde{6})$
${|\Gamma \Rightarrow B, \Delta|_{M,\nu} \geqslant 0}$  iff,\,
for every ${r \in [0, +\infty)}$, 
at least one of the inequalities
\,${|\Gamma \Rightarrow \SpV{q}, \Delta|_{M[\SpV{q}\mapsto r],\, \nu} \geqslant 0}$ 
or 
\,${|\SpV{q} \Rightarrow B|_{M[\SpV{q}\mapsto r],\, \nu} \geqslant 0}$
holds.
\end{lemma}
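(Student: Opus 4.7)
The plan is to prove each of the eight assertions by unfolding the definition of the sequent-value abbreviation $|\Gamma \Rightarrow \Delta|_{M,\nu}$ from \cite[Section 2.2]{Ger2017}, together with the semantics of the logical symbols and of the two types of semipropositional variables. In each case, the condition ``$|\ldots|_{M,\nu} \geqslant 0$'' reduces to a linear-arithmetic statement about the truth values of the members of $\Gamma, \Delta$, about $|A|_{M,\nu}$ and $|B|_{M,\nu}$, and (when one is present) about the real $r$ assigned to the semipropositional variable; the asserted equivalence then becomes a case analysis of that statement.

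Assertion (2) I would obtain directly from $|A \to B|_{M,\nu} = \min(1 - |A|_{M,\nu} + |B|_{M,\nu},\, 1)$: the condition ``$|\Gamma \Rightarrow A \to B,\Delta|_{M,\nu} \geqslant 0$'' is preserved by the $\min$ in the succedent precisely when it is preserved for each of the two arguments of the $\min$, and these two arguments correspond to the two premises $\Gamma \Rightarrow \Delta$ and $\Gamma, A \Rightarrow B, \Delta$. For (3) and (4) I would use $|\forall x A|_{M,\nu} = \inf_{d \in \mathcal{D}} |A|_{M,\nu[x \mapsto d]}$ and its dual for $\exists$; because $y$ is fresh in $\Gamma, \Delta, A$, the value $|\RePl{A}{x}{y}|_{M,\nu[y \mapsto d]}$ coincides with $|A|_{M,\nu[x \mapsto d]}$, while $|\Gamma|_{M,\nu[y \mapsto d]}$ and $|\Delta|_{M,\nu[y \mapsto d]}$ agree with $|\Gamma|_{M,\nu}$ and $|\Delta|_{M,\nu}$; the $\inf/\sup$ therefore commutes through the (linear) sequent-value expression and turns the original condition into a universal one over $d \in \mathcal{D}$. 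Assertions (5) and (6) are the corresponding one-directional facts, obtained by witnessing with a single $d$ and monotonicity of $\inf$ and $\sup$.

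The substantive work lies in (1), $(\widetilde{5})$, and $(\widetilde{6})$, which all share the same min-splitting/density pattern. In each, the LHS unfolds --- via the $\min$ in $|A \to B|$ for (1), or via the upper bound $1$ and lower bound $0$ on truth values of formulas for $(\widetilde{5})$ and $(\widetilde{6})$ --- into a disjunction of two linear inequalities in the relevant sums of formula values. The RHS, by contrast, is parametrised by $r$ ranging over a half-line: the two sequent-value inequalities involving $\SpV{p}$ (resp.\ $\SpV{q}$) are affine in $r$ with opposite monotonicities, so ``at least one of them holds for every $r$ in the half-line'' amounts to saying that the associated upper and lower thresholds on $r$ leave no bad $r$ inside the range. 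A short interval-endpoint analysis then shows that this ``no-bad-$r$'' condition coincides with the disjunction on the LHS, once one observes that the type-$1$ range $(-\infty, 1]$ precisely matches the cap at $1$ coming from the $\min$ in $|A \to B|$ and from $|B|_{M,\nu} \leqslant 1$, while the type-$0$ range $[0, +\infty)$ matches $|B|_{M,\nu} \geqslant 0$; these asymmetric ranges are exactly what is needed for both directions of the equivalences to go through. Keeping the signs and half-line endpoints aligned with the $\min$-cases of the LHS is the main bookkeeping obstacle in the proof.
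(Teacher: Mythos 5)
Your plan is correct and follows essentially the same route as the paper: each assertion is unfolded into a linear-arithmetic condition on the sequent values, and the equivalences involving a semipropositional variable ($(1)$, $(\widetilde{5})$, $(\widetilde{6})$) are settled by a density-of-the-reals argument in which the bounds ${0 \leqslant |B|_{M,\nu} \leqslant 1}$ are exactly what aligns the half-lines $(-\infty,1]$ and $[0,+\infty)$ with the thresholds coming from the negated inequalities. This is precisely the paper's treatment of $(\widetilde{5})$ and $(\widetilde{6})$ (stated there in contrapositive form), with $(1)$--$(6)$ deferred to the analogous computations in the earlier paper's Lemma~2.
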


\begin{proof}
Assertions (1)--(6) stated above are proved similarly to 
assertions (1)--(6) in \cite[Lemma~2]{Ger2017}. 

Let us prove assertions $(\widetilde{5})$ and $(\widetilde{6})$.
Denote $|\Gamma|_{M,\nu}$, $|\Delta|_{M,\nu}$, and $|B|_{M,\nu}$
by $\gamma$, $\delta$, and $b$, respectively; 
and notice that ${0 \leqslant b \leqslant 1}$.

Assertion $(\widetilde{5})$ is equivalent to the following: 
$$(\widetilde{5}')\ \delta-\gamma+1<b  \iff  
  (\widetilde{5}'')\ \delta-\gamma+1<r<b
        \,\text{ for some }  r\leqslant 1.$$
It is clear that $(\widetilde{5}'')$ implies $(\widetilde{5}')$.
If $(\widetilde{5}')$ holds, then by the density of the set of all real numbers, 
both inequalities from $(\widetilde{5}'')$ hold for some ${r < b \leqslant 1}$.
Thus $(\widetilde{5})$ holds.

Assertion $(\widetilde{6})$ is equivalent to the following: 
$$b<\gamma-\delta+1  \iff  
  b<r<\gamma-\delta+1  \text{ for some }  r\geqslant 0.$$
By the density of the set of all real numbers, the last equivalence 
holds and so does $(\widetilde{6})$.
\end{proof}

\begin{theorem}[soundness of $\GthreeLukA$]  \label{GTh:CorrGthree}
If \ ${\vdash_{\GthreeLukA} \mathcal{H}}$, \,then
\ ${\vDash \mathcal{H}}$.
\end{theorem}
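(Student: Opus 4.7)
The plan is a straightforward induction on the height of the $\GthreeLukA$-proof $D$ of $\mathcal{H}$. The two lemmas above have already done essentially all the semantic work; what remains is to glue them together correctly.

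For the base case, assume $\mathcal{H}$ is an axiom. By the definition of a $\GthreeLukA$-axiom, for every hs-interpretation $M$ and every valuation $\nu$ there exists an atomic sequent inside $\mathcal{H}$ that is true under $M$ and $\nu$. Since a hypersequent is true (under $M,\nu$) whenever at least one of its component sequents is, this immediately gives $|\mathcal{H}|_{M,\nu}$-truth for every $M,\nu$, i.e., $\vDash\mathcal{H}$.

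For the inductive step, suppose $D$ ends with an application of some inference rule of $\GthreeLukA$ with premises $\mathcal{H}_1,\dots,\mathcal{H}_k$ and conclusion $\mathcal{H}$. Each $\mathcal{H}_i$ has a subproof of strictly smaller height, so the induction hypothesis yields $\vDash \mathcal{H}_i$ for all $i$. By Lemma~\ref{GLem:CorrSemInvGthreeRules}, the rule in question is sound: validity of all its premises entails validity of its conclusion. Hence $\vDash \mathcal{H}$, completing the induction.

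I expect no real obstacle here, since the rule-by-rule soundness analysis has been absorbed into Lemma~\ref{GLem:CorrSemInvGthreeRules} (and, through it, into Lemma~\ref{GLem:SemGtwoGthreeRules}). The only point worth noting is that the proper-parameter and proper-semipropositional-variable provisos for rules such as $(\Rightarrow\forall)^3$, $(\exists\Rightarrow)^3$, $(\to\Rightarrow)^3$, $(\forall\Rightarrow)^3$, and $(\Rightarrow\exists)^3$ are handled uniformly by the universal quantification over hs-interpretations and valuations built into the notion of validity: because validity ranges over all $M$ and all $\nu$, it automatically ranges over all choices of $\mu(a)$, all $\nu(y)$, and all $\mu(\SpV{p}), \mu(\SpV{q})$, which is exactly what assertions $(1)$--$(6)$, $(\widetilde{5})$, $(\widetilde{6})$ of Lemma~\ref{GLem:SemGtwoGthreeRules} require to propagate truth upward through the corresponding rule. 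Thus the theorem follows with no further calculation.
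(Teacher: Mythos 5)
Your proposal is correct and matches the paper's proof, which likewise reduces the theorem to the validity of axioms together with the soundness of all rules established in Lemma~\ref{GLem:CorrSemInvGthreeRules}; the paper merely leaves the induction on the proof tree implicit. Your added remark on how the eigenvariable/proper-variable provisos are absorbed into the quantification over hs-interpretations and valuations is consistent with how the paper handles this inside Lemmas~\ref{GLem:CorrSemInvGthreeRules} and~\ref{GLem:SemGtwoGthreeRules}.
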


\begin{proof}
All axioms of $\GthreeLukA$ are obviously valid, and all the inference rules 
of $\GthreeLukA$ are sound by Lemma~\ref{GLem:CorrSemInvGthreeRules}.
\end{proof}

Using the semantical invertibility of the propositional rules of $\GthreeLukA$ 
(see Lemma \ref{GLem:CorrSemInvGthreeRules}), we can easily prove

\begin{proposition}
Let $\mathcal{H}$ be a quantifier-free hypersequent. 
If \ ${\vDash \mathcal{H}}$, \,then
\ ${\vdash_{\GthreeLukA} \mathcal{H}}$.
\end{proposition}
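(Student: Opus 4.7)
The plan is to proceed by induction on the number $n(\mathcal{H})$ of occurrences of $\to$ in the quantifier-free hypersequent $\mathcal{H}$.

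For the base case $n(\mathcal{H}) = 0$, every formula occurring in $\mathcal{H}$ is atomic (an atomic formula, a truth constant, or a semipropositional variable), so every sequent of $\mathcal{H}$ is atomic. The hypothesis $\vDash \mathcal{H}$ then says exactly that for every hs-interpretation $M$ and every valuation $\nu$ some atomic sequent of $\mathcal{H}$ is true under $M$ and $\nu$, which is the axiom condition for $\GthreeLukA$. Hence $\mathcal{H}$ is itself an axiom.

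For the inductive step, assume $n(\mathcal{H}) > 0$. Some formula occurrence in $\mathcal{H}$ has $\to$ as its principal connective and is therefore of the form $A \to B$; since $\mathcal{H}$ is quantifier-free, so are $A$ and $B$. If $A \to B$ appears on the right of some sequent of $\mathcal{H}$, write $\mathcal{H}$ as $\mathcal{G}\,|\,\Gamma \Rightarrow A\to B, \Delta$ and apply $(\Rightarrow\to)^3$ bottom-up. If $A \to B$ appears on the left, write $\mathcal{H}$ as $\mathcal{G}\,|\,\Gamma, A\to B \Rightarrow \Delta$ and apply $(\to\Rightarrow)^3$ bottom-up, choosing a type\nobreakdash-1 semipropositional variable $\SpV{p}$ not occurring in $\mathcal{H}$ (possible since $\mathcal{H}$ mentions only finitely many such variables). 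By Lemma~\ref{GLem:CorrSemInvGthreeRules} each of these rules is semantically invertible, so every premise of the chosen backward application is valid. Moreover, in each premise the principal occurrence of $\to$ has been removed while no new $\to$ has been introduced (the fresh semipropositional variable $\SpV{p}$ is atomic); the premises are thus quantifier-free with strictly smaller $n$-count. By the induction hypothesis each premise is $\GthreeLukA$-provable, and composing the resulting subproofs by the chosen rule yields a $\GthreeLukA$-proof of $\mathcal{H}$.

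The only subtle point is the eigenvariable-style condition on $\SpV{p}$ in $(\to\Rightarrow)^3$, but this is immediate from the countable supply of semipropositional variables of type~1; everything else is the standard invertibility-driven bottom-up decomposition.
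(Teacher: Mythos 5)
Your proof is correct and follows exactly the route the paper indicates: the paper omits the details, saying only that the Proposition follows easily from the semantic invertibility of the propositional rules (Lemma~\ref{GLem:CorrSemInvGthreeRules}), and your induction on the number of occurrences of $\to$, with the base case matching the axiom condition and the inductive step driven by backward applications of $(\Rightarrow\to)^3$ and $(\to\Rightarrow)^3$ with a fresh $\SpV{p}$, is precisely that argument spelled out.
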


\section{Invertibility of the rules of the calculus $\GthreeLukA$
         and~its~relationship~to~the~calculus~$\GoneLukA$}
\label{Sec:Invert}

Suppose $\mathfrak{C}$ is a calculus.
By $h(D)$ we denote the height of a (tree-like) $\mathfrak{C}$-proof $D$. 
Let us recall some definitions (cf., e.g., \cite[Section 3.4.4]{Troelstra2000}).

A rule is called \emph{admissible} for $\mathfrak{C}$ if, for 
all applications $\mathcal{H}_1;\ldots;\mathcal{H}_k / \mathcal{H}$
of the rule and all $\mathfrak{C}$-proofs
$D_1$ of $\mathcal{H}_1$, \ldots, $D_k$ of $\mathcal{H}_k$,
there exists a $\mathfrak{C}$-proof $D$ of $\mathcal{H}$;
the rule is called \emph{hp-admissible}, or \emph{height-preserving admissible},
for $\mathfrak{C}$ if, in addition, the condition
${h(D) \leqslant \max\{h(D_1),\ldots,h(D_k)\}}$ holds.
Everywhere in the sequel, the existense of such a proof $D$ means that 
it can be constructed if such proofs $D_1,\ldots,$ $D_k$ are given.

A $k$-premise rule $\mathcal{R}$ is called \emph{invertible} 
(resp. \emph{hp-invertible}, or \emph{height-preserving invertible})
in $\mathfrak{C}$ if, for each ${i=1,\ldots,k}$,\, the rule
${\{\langle \mathcal{H}, \mathcal{H}_i \rangle \mid 
  \langle \mathcal{H}_1,\ldots,\mathcal{H}_k,\mathcal{H} \rangle \in \mathcal{R}\}}$
is admissible (resp. hp-admissible) for $\mathfrak{C}$.

\begin{lemma}  \label{GLem:AtEwSplitAdmGthree}
The following rules are hp-admissible for the calculus $\GthreeLukA$:
$$
\dfrac{\mathcal{G}}{\mathcal{G}\,|\,S}~\text{\textnormal{(ew)}}^3,
\quad
\dfrac{\mathcal{G}\,|\,\Gamma_1,\Gamma_2\Rightarrow\Delta_1,\Delta_2}
  {\mathcal{G}\,|\,\Gamma_1\Rightarrow\Delta_1\,|\,\Gamma_2\Rightarrow\Delta_2}~\text{\textnormal{(split)}}^3,
\quad
\dfrac{\mathcal{G}\,|\,\Gamma\Rightarrow\Delta}
  {\mathcal{G}\,|\,\Gamma,P \Rightarrow P,\Delta}~\text{\textnormal{(at$\Rightarrow$at)}}^3,
$$
where $P$ is an atom \textnormal{(}i.e., an atomic $\RPLA$-formula or
a semipropositional variable\textnormal{)}.
\end{lemma}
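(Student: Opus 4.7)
The plan is to prove each of the three statements by induction on $h(D)$, where $D$ is the given $\GthreeLukA$-proof of the rule's premise, following the pattern customary for hp-admissibility of structural rules in hypersequent calculi.

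For $(\text{ew})^3$, no induction is really needed: the construction $D\,|\,\mathcal{G}$ defined at the end of Section~\ref{Sec:GthreeSound} (with $\mathcal{G}$ the one-sequent hypersequent $S$) already produces a $\GthreeLukA$-proof of $\mathcal{G}\,|\,S$ whose height is exactly $h(D)$, once proper semipropositional variables and parameters in $S$ are renamed away.

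For $(\text{at}\Rightarrow\text{at})^3$, I would argue by induction on $h(D)$. In the base case, the axiom property is preserved because $P$ is itself an atom (so $\Gamma,P\Rightarrow P,\Delta$ is atomic whenever $\Gamma\Rightarrow\Delta$ is) and because the two occurrences of $P$ cancel in the sequent valuation, leaving the truth value of that sequent unchanged under every $M$ and $\nu$. In the inductive step, consider the last rule $\mathcal{R}$ of $D$. If its principal sequent lies inside $\mathcal{G}$, apply the IH to each premise of $\mathcal{R}$ and reapply $\mathcal{R}$. Otherwise the principal sequent is $\Gamma\Rightarrow\Delta$; then insert $P$ on the left and $P$ on the right of the appropriate sequent in every premise of $\mathcal{R}$, apply the IH, and reapply $\mathcal{R}$, renaming the proper variable or parameter of $\mathcal{R}$ if it happens to occur in $P$ (only possible when $P$ is a semipropositional variable).

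The case of $(\text{split})^3$ has the same skeleton but relies on a semantic fact in its base case: the sequent valuation is additive under merging, so that, up to a fixed normalization, $|\Gamma_1,\Gamma_2\Rightarrow\Delta_1,\Delta_2|_{M,\nu}=|\Gamma_1\Rightarrow\Delta_1|_{M,\nu}+|\Gamma_2\Rightarrow\Delta_2|_{M,\nu}$. Hence for any $M,\nu$ under which no atomic sequent of $\mathcal{G}$ is true, the truth of the combined sequent entails that at least one of the split sequents is true as well, so the split hypersequent is an axiom. In the inductive step, the interesting subcase is when $\mathcal{R}$ decomposes a formula belonging to the combined sequent, say in $\Gamma_i\Rightarrow\Delta_i$. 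Rename $\mathcal{R}$'s proper semipropositional variable or parameter so that it does not occur in $\Gamma_j,\Delta_j$ for $j\neq i$; then apply the IH to every premise of $\mathcal{R}$, splitting off $\Gamma_j\Rightarrow\Delta_j$ as a new sequent in each premise, and reapply $\mathcal{R}$ with $\Gamma_j\Rightarrow\Delta_j$ sitting in the enlarged context.

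The main obstacle is the inductive step for $(\text{split})^3$ when $\mathcal{R}$ acts on the combined sequent: one must check that $\mathcal{R}$'s eigenvariable side conditions continue to hold with respect to the new conclusion, which now also contains $\Gamma_j\Rightarrow\Delta_j$. The renaming convention from Section~\ref{Sec:GthreeSound} makes this routine, and the presence of both type~0 and type~1 semipropositional variables causes no difficulty, since any such variable contributes identically to the combined and to the split forms of the sequent valuation.
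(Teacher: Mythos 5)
Your proposal is correct and follows essentially the same route as the paper: the paper likewise handles $(\text{ew})^3$ via the $D\,|\,S$ construction, and proves $(\text{split})^3$ and $(\text{at}\Rightarrow\text{at})^3$ by modifying, throughout the proof tree, exactly those ancestors of the designated sequent that it calls \emph{augmentable} --- precisely your ``main continuation'' sequents, excluding the side sequents $B \Rightarrow \SpV{p}, A$, $\forall x A \Rightarrow \SpV{p}$, $\SpV{q} \Rightarrow \RePl{A}{x}{t}$, etc.\ introduced by $(\to\Rightarrow)^3$, $(\forall\Rightarrow)^3$, $(\Rightarrow\exists)^3$ --- using the additivity of the sequent valuation at the leaves for $(\text{split})^3$ and an up-front renaming of proper variables and parameters occurring in $P$ for $(\text{at}\Rightarrow\text{at})^3$; that the paper organizes this as one global tree transformation rather than an induction on $h(D)$ is immaterial. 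One cosmetic slip: a proper \emph{parameter} can occur in $P$ also when $P$ is an atomic $\RPLA$-formula (through its argument terms), not only when $P$ is a semipropositional variable, but since you perform the renaming whenever the occurrence actually happens, nothing breaks.
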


\begin{proof}
1. The rule $\text{(ew)}^3$ is obviously hp-admissible: 
if $D$ is a proof of a premise of the rule, 
then ${D\,|\,S}$ is a proof of its conclusion.

2. Let us establish the hp-admissibility of the rule $\text{(split)}^3$.

Suppose $\mathcal{S}_0$ is a sequent occurrence in the root of a proof search 
tree $D_0$; 
then we say that an ancestor $\mathcal{S}$ of the occurrence $\mathcal{S}_0$ 
is \emph{augmentable}  
unless $\mathcal{S}$ is an ancestor of an occurrence $\mathcal{S}'$ 
of a sequent $S'$ such that:

(i) $S'$ has the form
\,(a)~${B \Rightarrow \SpV{p}, A}$,\,
\,(b)~${\forall x A \Rightarrow \SpV{p}}$\, or
\,${\RePl{A}{x}{t} \Rightarrow \SpV{p}}$,\, or
\,(c)~${\SpV{q} \Rightarrow \exists x A}$\, or
\,${\SpV{q} \Rightarrow \RePl{A}{x}{t}}$;\,

(ii) in $D_0$,\,
$\mathcal{S}'$ is a sequent occurrence in the premise of an application
of the rule
(a)~${(\to\Rightarrow)^3}$,
(b)~${(\forall\Rightarrow)^3}$, or
(c)~${(\Rightarrow\exists)^3}$, respectively;\, and 

(iii) $\mathcal{S}'$ is distinguished in the formulation of this rule.

Let $D_0$ be a proof for the premise
\,${\mathcal{G}\,|\,\Gamma_1,\Gamma_2\Rightarrow\Delta_1,\Delta_2}$\, 
of the rule $\text{(split)}^3$.
A tree $D$ is constructed from $D_0$ as follows:
each occurrence $\mathcal{S}$ of a sequent $S$ of the form
\,$\Pi_1,\Pi_2\Rightarrow$ $\Sigma_1,\Sigma_2$\, such that

(1) $\mathcal{S}$ is an augmentable ancestor of the distinguished occurrence 
of the sequent \,${\Gamma_1,\Gamma_2\Rightarrow\Delta_1,\Delta_2}$\,  
in the root of $D_0$, and

(2) for each ${i=1,2}$\, and 
each formula occurrence $\mathcal{F}$ (as a sequent member) in $S$,
if $\mathcal{F}$ is contained in the distinguished occurrence $\Pi_i$ or 
$\Sigma_i$ in $S$, 
then $\mathcal{F}$ is an ancestor of some for\-mula occurrence contained in
the distinguished occurrence $\Gamma_i$ or $\Delta_i$ in the root of $D_0$,\\
is replaced by \,${\Pi_1\Rightarrow\Sigma_1 \,|\, \Pi_2\Rightarrow\Sigma_2}$.

The rules of $\GthreeLukA$ guarantee that, in a premise of a rule application,
there is exactly one augmentable ancestor of the principal sequent occurrence.
Therefore, when the tree $D$ is constructed,
exactly one sequent occurrence in each node hypersequent of the proof $D_0$
is split into two sequents.
Then it is easy to see that each application of a rule in $D_0$ 
is turned into an application of the same rule.
Clearly, the hypersequent 
\,${\mathcal{G}\,|\,\Gamma_1\Rightarrow\Delta_1\,|\,\Gamma_2\Rightarrow\Delta_2}$\,
is in the root of the tree $D$.
Hence $D$ is a proof search tree for the conclusion 
of the rule $\text{(split)}^3$.

Let $\mathcal{L}$ be a leaf of the tree $D_0$.
Let $\mathcal{S}$ be an occurrence of an atomic sequent $S$ in $\mathcal{L}$ 
such that $S$ has the form \,${\Pi_1,\Pi_2\Rightarrow\Sigma_1,\Sigma_2}$,\,
and $\mathcal{S}$ and $S$ meet conditions (1) and (2) above.  Then
the leaf of the tree $D$ obtained from $\mathcal{L}$ 
contains the atomic sequents
\,${\Pi_1\Rightarrow\Sigma_1}$\, and \,${\Pi_2\Rightarrow\Sigma_2}$.\,
So $D$ is a proof.

It remains to note that ${h(D) = h(D_0)}$.

3. The rule $\text{(at$\Rightarrow$at)}^3$ is hp-admissible, since,
given a proof $D$ for \,${\mathcal{G}\,|\,\Gamma\Rightarrow\Delta}$,\,
we can construct a proof for \,${\mathcal{G}\,|\,\Gamma,P \Rightarrow P,\Delta}$\, 
(with $P$ being an atom) in the following way.
First, in $D$, rename 
all proper semipropositional variables and proper parameters of $D$ 
that occur in $P$ to new distinct ones. 
Next, in the resulting proof for \,${\mathcal{G}\,|\,\Gamma\Rightarrow\Delta}$,\,
add the atom $P$ to the antecedent and succedent of each augmentable 
ancestor of the distinguished occurrence of the sequent 
\,${\Gamma\Rightarrow\Delta}$\, in the root.
\end{proof}

\begin{lemma}  \label{GLem:InvGthreeRules}
All the inference rules of the calculus $\GthreeLukA$ are hp-invertible in it.
\end{lemma}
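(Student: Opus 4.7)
The plan is to proceed by induction on the height $h(D)$ of a given $\GthreeLukA$-proof $D$ of the conclusion $\mathcal{H}$ of the rule application being inverted, handling all six rules simultaneously. In the base case, when $\mathcal{H}$ is an axiom, I use the fact that the principal formulas of all the rules ($A\to B$, $\forall x A$, $\exists x A$) are non-atomic. Hence any atomic sequent witnessing truth of $\mathcal{H}$ under an arbitrary hs-interpretation must lie either in the side hypersequent $\mathcal{G}$ or, in the case of the first premise of $(\Rightarrow\to)^3$, in $\Gamma\Rightarrow\Delta$ itself. Such an atomic sequent is preserved in every premise of the rule and continues to witness truth there, since restricting the hs-interpretation so as to forget any newly introduced fresh semipropositional variable $\SpV{p}$, $\SpV{q}$, or fresh parameter does not affect its truth value. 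Hence each premise is itself an axiom, provable with height~$0$.

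In the inductive step, let $\mathcal{R}'$ be the last rule of $D$, with immediate sub-proofs $D_1,\ldots,D_k$. If $\mathcal{R}'$ does not interact with the principal occurrence of the inverted rule, I apply the inductive hypothesis to each $D_i$ (first, if needed, renaming the proper parameter or proper semipropositional variable of $\mathcal{R}'$ to avoid clashing with the target's $t$, $a$, $\SpV{p}$, or $\SpV{q}$), obtaining proofs of the correspondingly inverted hypersequents of height $\leqslant h(D_i)$, and then reapply $\mathcal{R}'$. This yields a proof of the target of height $\leqslant h(D)$.

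When $\mathcal{R}'$ acts on the same principal occurrence as the inverted rule, the direct case covers $(\to\Rightarrow)^3$, $(\Rightarrow\to)^3$, $(\Rightarrow\forall)^3$, and $(\exists\Rightarrow)^3$: the premises of $\mathcal{R}'$ already match the targets after a height-preserving renaming of the fresh parameter or fresh semipropositional variable. The main obstacle is the sub-case where the inverted rule is $(\forall\Rightarrow)^3$ or $(\Rightarrow\exists)^3$ and $\mathcal{R}'$ is the same rule on the same principal formula but with a proper term $t'$ different from our chosen $t$. Then $\mathcal{R}'$'s premise, after renaming $\SpV{p}'\to\SpV{p}$ (respectively $\SpV{q}'\to\SpV{q}$), differs from the target only in the instance component $\RePl{A}{x}{t'}\Rightarrow\SpV{p}$ versus $\RePl{A}{x}{t}\Rightarrow\SpV{p}$. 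My plan is to recover the target by applying the inductive hypothesis once more, inverting the same rule on the surviving $\forall xA\Rightarrow\SpV{p}$ (respectively $\SpV{q}\Rightarrow\exists xA$) component with our~$t$ and a fresh auxiliary semipropositional variable, and then reshaping the result using the hp-admissible structural rules from Lemma~\ref{GLem:AtEwSplitAdmGthree}.

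The delicate bookkeeping in this last sub-case --- making the nested IH applications keep the overall height under $h(D)$, and massaging the resulting hypersequent into the exact target by means of $(\text{ew})^3$, $(\text{split})^3$, and $(\text{at}\Rightarrow\text{at})^3$ --- is where I expect the bulk of the work, and where the repetition-freedom of $\GthreeLukA$ should most clearly pay off.
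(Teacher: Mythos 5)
Your induction handles $(\to\Rightarrow)^3$, $(\Rightarrow\to)^3$, $(\Rightarrow\forall)^3$, and $(\exists\Rightarrow)^3$ exactly as the paper does (the classical height-preserving inversion scheme, using repetition-freedom in the case where the last application acts on the distinguished occurrence), and that part is fine. The gap is precisely where you locate "the bulk of the work": the sub-case of inverting $(\forall\Rightarrow)^3$ or $(\Rightarrow\exists)^3$ when the last application of $D$ acts on the same principal occurrence with a proper term $t'\neq t$. After renaming, you have a proof of \,${\mathcal{G}\,|\,\Gamma,\SpV{p}\Rightarrow\Delta\,|\,\forall xA\Rightarrow\SpV{p}\,|\,\RePl{A}{x}{t'}\Rightarrow\SpV{p}}$,\, and a nested application of the inductive hypothesis to the component ${\forall xA\Rightarrow\SpV{p}}$ with term $t$ and a fresh $\SpV{p}_1$ yields a proof of \,${\mathcal{G}\,|\,\Gamma,\SpV{p}\Rightarrow\Delta\,|\,\RePl{A}{x}{t'}\Rightarrow\SpV{p}\,|\,\SpV{p}_1\Rightarrow\SpV{p}\,|\,\forall xA\Rightarrow\SpV{p}_1\,|\,\RePl{A}{x}{t}\Rightarrow\SpV{p}_1}$.\, You cannot massage this into the target \,${\mathcal{G}\,|\,\Gamma,\SpV{p}\Rightarrow\Delta\,|\,\forall xA\Rightarrow\SpV{p}\,|\,\RePl{A}{x}{t}\Rightarrow\SpV{p}}$\, with the rules of Lemma~\ref{GLem:AtEwSplitAdmGthree}: $\text{(ew)}^3$, $\text{(split)}^3$, and $\text{(at$\Rightarrow$at)}^3$ only ever add sequents, split a sequent, or add atoms; none of them deletes a hypersequent component (which is unsound in general) or identifies $\SpV{p}_1$ with the already-occurring $\SpV{p}$. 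So the spurious components ${\RePl{A}{x}{t'}\Rightarrow\SpV{p}}$ and ${\SpV{p}_1\Rightarrow\SpV{p}}$ cannot be eliminated, and the reshaping step fails.

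The fix is to drop the induction for these two rules altogether. As the paper observes, the premise of $(\forall\Rightarrow)^3$ is obtained from its conclusion by one application each of $\text{(at$\Rightarrow$at)}^3$ (adding $\SpV{p}$ to both sides of ${\Gamma,\forall xA\Rightarrow\Delta}$), $\text{(split)}^3$ (splitting off ${\forall xA\Rightarrow\SpV{p}}$), and $\text{(ew)}^3$ (weakening in ${\RePl{A}{x}{t}\Rightarrow\SpV{p}}$ for the desired $t$); dually for $(\Rightarrow\exists)^3$. Since all three rules are hp-admissible by Lemma~\ref{GLem:AtEwSplitAdmGthree}, any proof of the conclusion yields a proof of the premise of no greater height, uniformly in $t$ and with no case analysis on the last rule of $D$. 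This is exactly the step your inductive detour was trying to reconstruct locally, and it is the only place where your argument needs repair.
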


\begin{proof}
The rule ${(\forall\Rightarrow)^3}$ is hp-invertible, 
since we can obtain its premise from its conclusion using rules, 
which are hp-admissible (by Lemma \ref{GLem:AtEwSplitAdmGthree}):
\begin{center}
\def\ScoreOverhang{0pt}
\AxiomC{$\mathcal{G}\,|\, \Gamma, \forall x A \Rightarrow \Delta$}
\RightLabel{$\text{(at$\Rightarrow$at)}^3$}
\UnaryInfC{$\mathcal{G}\,|\, \Gamma, \forall x A, \SpV{p} \Rightarrow \SpV{p}, \Delta$}
\RightLabel{$\text{(split)}^3$}
\UnaryInfC{$\mathcal{G}\,|\, \Gamma, \SpV{p} \Rightarrow \Delta \,|\, \forall x A \Rightarrow \SpV{p}$}
\RightLabel{$\text{(ew)}^3$.}
\UnaryInfC{$\mathcal{G}\,|\, \Gamma, \SpV{p} \Rightarrow\Delta \,|\, \forall x A \Rightarrow \SpV{p} \,|\, \RePl{A}{x}{t} \Rightarrow \SpV{p}$}
\DisplayProof
\end{center}

The hp-invertibility of the rule ${(\Rightarrow\exists)^3}$ 
is established very similarly.

The fact that all the inference rules of $\GthreeLukA$ are repetition-free
allows us to demonstrate the hp-invertibility of the rules 
${(\to\Rightarrow)^3}$, ${(\Rightarrow\to)^3}$,
${(\Rightarrow\forall)^3}$, and ${(\exists\Rightarrow)^3}$
according to the classical scheme (see, e.g., 
\cite[Proposition 3.5.4]{Troelstra2000}).
We give these demonstrations in full because later\footnote{
  See the proofs of Lemma \ref{GLem:StepRearrangingGthreeProof} and
  Theorem \ref{GTh:MidHsGthree}.
}
we will need to check that formal proofs constructed in them 
enjoy some properties.

\smallskip\noindent\textbf{I.}
Let us demonstrate that the rule ${(\to\Rightarrow)^3}$ is hp-invertible.
Toward this end, we show that, given a proof $D$ for a hypersequent of the form
\,${\mathcal{G}\,|\,\Gamma, A\to B \Rightarrow \Delta}$\,
and a semipropositional variable $\SpV{p}$ not occurring in the hypersequent,
we can construct a proof $D'$ for
\,${\mathcal{G}\,|\, \Gamma, \SpV{p} \Rightarrow \Delta\,|\,B \Rightarrow \SpV{p}, A}$\,
with \,${h(D') \leqslant h(D)}$.
We proceed by induction on $h(D)$.

We can assume that $\SpV{p}$ does not occur in $D$
(otherwise replace all occurrences of $\SpV{p}$ in $D$ by 
a semipropositional variable of type~1 not occurring in $D$).

1. If ${h(D)=0}$ (i.e., $D$ consists of a single axiom), then
$\mathcal{G}$ is an axiom, hence so is
\,${\mathcal{G}\,|\, \Gamma, \SpV{p} \Rightarrow \Delta\,|\,B \Rightarrow \SpV{p}, A}$.

2. Let the root hypersequent
\,${\mathcal{G}\,|\,\Gamma, A\to B \Rightarrow \Delta}$\, in $D$ 
be the conclusion of an application $R$ of a rule $\mathcal{R}$.

2.1. Suppose 
the principal formula occurrence in $R$ is the distinguished occurrence of ${A\to B}$.
By $D_1$ denote the subtree of the root of $D$;
$D_1$ is a proof for the premise of $R$.
The premise has the form 
\,${\mathcal{G}\,|\, \Gamma, \SpV{p}_1 \Rightarrow \Delta\,|\,B \Rightarrow \SpV{p}_1, A}$.
Then replacing all occurrences of $\SpV{p}_1$ in $D_1$ by $\SpV{p}$ 
yields a proof $D'$ for
\,${\mathcal{G}\,|\, \Gamma, \SpV{p} \Rightarrow \Delta\,|\,B \Rightarrow \SpV{p}, A}$\,
with ${h(D') < h(D)}$. 

2.2. Now suppose 
the principal formula occurrence in $R$ is not the distinguished occurrence of ${A\to B}$.

2.2.1. If $\mathcal{R}$ is a one-premise rule, the proof $D$ looks like this:
\begin{center}
\def\ScoreOverhang{0pt}
\AxiomC{$D_1$} \noLine 
\UnaryInfC{$\mathcal{G}_1 \,|\, \Gamma_1, A \to B \Rightarrow \Delta_1$}
\RightLabel{\,$\mathcal{R}$.}
\UnaryInfC{$\mathcal{G} \,|\, \Gamma, A \to B \Rightarrow \Delta$}
\DisplayProof
\end{center}

By applying the induction hypothesis to the proof $D_1$,
we construct a proof $D_1'$ for
\,${\mathcal{G}_1 \,|\, \Gamma_1, \SpV{p} \Rightarrow \Delta_1 \,|\, B \Rightarrow \SpV{p}, A}$\,
with \,${h(D_1') \leqslant h(D_1)}$.
By applying $\mathcal{R}$ to the root hypersequent of the proof $D_1'$, 
we obtain a proof $D'$ for
\,${\mathcal{G}\,|\, \Gamma, \SpV{p} \Rightarrow \Delta\,|\,B \Rightarrow \SpV{p}, A}$\,
such that ${h(D') \leqslant h(D)}$.

2.2.2. If $\mathcal{R}$ is a two-premise rule,
i.e., the rule ${(\Rightarrow\to)^3}$, then the proof $D$ looks like this:
\begin{center}
\def\ScoreOverhang{0pt}
\AxiomC{$D_1$} \noLine 
\UnaryInfC{$\mathcal{G}_1 \,|\, \Gamma_1, A \to B \Rightarrow \Delta_1$}
\AxiomC{$D_2$} \noLine 
\UnaryInfC{$\mathcal{G}_2 \,|\, \Gamma_2, A \to B \Rightarrow \Delta_2$}
\RightLabel{\,$\mathcal{R}$.}
\BinaryInfC{$\mathcal{G} \,|\, \Gamma, A \to B \Rightarrow \Delta$}
\DisplayProof
\end{center}

For each ${i=1,2}$, by the induction hypothesis applied to the proof $D_i$,
we construct a proof $D_i'$ for
\,${\mathcal{G}_i \,|\, \Gamma_i, \SpV{p} \Rightarrow \Delta_i \,|\, B \Rightarrow \SpV{p}, A}$\,
with \,${h(D_i') \leqslant h(D_i)}$.

By applying $\mathcal{R}$ to the root hypersequents of 
the proofs $D_1'$ and $D_2'$, we get a proof $D'$ for
\,${\mathcal{G} \,|\, \Gamma, \SpV{p} \Rightarrow \Delta \,|\, B \Rightarrow \SpV{p}, A}$\,
with ${h(D') \leqslant h(D)}$.

\smallskip\noindent\textbf{II.}
In order to establish the hp-invertibility of the rule ${(\Rightarrow\to)^3}$,
we show that, given a proof $D$ for a hypersequent of the form
\,${\mathcal{G}\,|\,\Gamma \Rightarrow A\to B, \Delta}$,\, we can construct 
a proof $D'$ for \,${\mathcal{G}\,|\, \Gamma \Rightarrow \Delta}$\,
and a proof $D''$ for \,${\mathcal{G}\,|\, \Gamma, A \Rightarrow B, \Delta}$\,
such that ${h(D') \leqslant h(D)}$ and ${h(D'') \leqslant h(D)}$.
We use induction on $h(D)$.

1. If ${h(D)=0}$, then $\mathcal{G}$ is an axiom, and so are
\,${\mathcal{G}\,|\, \Gamma \Rightarrow \Delta}$\, and
\,${\mathcal{G}\,|\, \Gamma, A \Rightarrow B, \Delta}$.

2. Let the root hypersequent
\,${\mathcal{G}\,|\,\Gamma \Rightarrow A\to B, \Delta}$\, in $D$ 
be the conclusion of an application $R$ of a rule $\mathcal{R}$.

2.1. If 
the principal formula occurrence in $R$ is the distinguished occurrence of ${A\to B}$,
then the subtrees of the root of $D$ are the desired proofs.

2.2. Suppose 
the principal formula occurrence in $R$ is not the distinguished occurrence of ${A\to B}$.

2.2.1. In the case the rule $\mathcal{R}$ is one-premise, 
the proof $D$ looks like this:
\begin{center}
\def\ScoreOverhang{0pt}
\AxiomC{$D_1$} \noLine 
\UnaryInfC{$\mathcal{G}_1 \,|\, \Gamma_1 \Rightarrow A \to B, \Delta_1$}
\RightLabel{\,$\mathcal{R}$.}
\UnaryInfC{$\mathcal{G} \,|\, \Gamma \Rightarrow A \to B, \Delta$}
\DisplayProof
\end{center}

Using the induction hypothesis, from the proof $D_1$, 
we construct a proof $D_1'$ for
\,${\mathcal{G}_1 \,|\, \Gamma_1 \Rightarrow \Delta_1}$\, and
a proof $D_1''$ for
\,${\mathcal{G}_1 \,|\, \Gamma_1, A \Rightarrow B, \Delta_1}$\,
such that ${h(D_1') \leqslant h(D_1)}$ and ${h(D_1'') \leqslant h(D_1)}$.

Applying $\mathcal{R}$ to the root hypersequent of the proof $D_1'$ gives
a proof $D'$ for \,${\mathcal{G}\,|\, \Gamma \Rightarrow \Delta}$\, 
with \,${h(D') \leqslant h(D)}$,\,
and applying $\mathcal{R}$ to the root hypersequent of the proof $D_1''$ gives
a proof $D''$ for \,${\mathcal{G}\,|\, \Gamma, A \Rightarrow B, \Delta}$\,
with \,${h(D'') \leqslant h(D)}$.

2.2.2. In the case the rule $\mathcal{R}$ is two-premise, 
the proof $D$ looks like this:
\begin{center}
\def\ScoreOverhang{0pt}
\AxiomC{$D_1$} \noLine 
\UnaryInfC{$\mathcal{G}_1 \,|\, \Gamma_1 \Rightarrow A \to B, \Delta_1$}
\AxiomC{$D_2$} \noLine 
\UnaryInfC{$\mathcal{G}_2 \,|\, \Gamma_2 \Rightarrow A \to B, \Delta_2$}
\RightLabel{\,$\mathcal{R}$.}
\BinaryInfC{$\mathcal{G} \,|\, \Gamma \Rightarrow A \to B, \Delta$}
\DisplayProof
\end{center}

For each ${i=1,2}$, by the induction hypothesis applied to the proof $D_i$,
we construct a proof $D_i'$ for 
\,${\mathcal{G}_i \,|\, \Gamma_i \Rightarrow \Delta_i}$\, and
a proof $D_i''$ for
\,${\mathcal{G}_i \,|\, \Gamma_i, A \Rightarrow B, \Delta_i}$\,
such that ${h(D_i') \leqslant h(D_i)}$ and ${h(D_i'') \leqslant h(D_i)}$.

Next, by applying $\mathcal{R}$ to the root hypersequents of 
the proofs $D_1'$ and $D_2'$, we obtain a proof $D'$ for 
\,${\mathcal{G}\,|\, \Gamma \Rightarrow \Delta}$\, 
with ${h(D') \leqslant h(D)}$.

Finally, applying $\mathcal{R}$ to the root hypersequents of 
the proofs $D_1''$ and $D_2''$ yields a proof $D''$ for
\,${\mathcal{G}\,|\, \Gamma, A \Rightarrow B, \Delta}$\,
with ${h(D'') \leqslant h(D)}$.

\smallskip\noindent\textbf{III.}
To establish the hp-invertibility of the rule ${(\Rightarrow\forall)^3}$,
we show that, given a proof $D$ for a hypersequent of the form
\,${\mathcal{G}\,|\,\Gamma \Rightarrow \forall x A, \Delta}$\,
and a parameter $a$ not occurring in the hypersequent,
we can construct a proof $D'$ for
\,${\mathcal{G}\,|\, \Gamma \Rightarrow \RePl{A}{x}{a}, \Delta}$\,
with \,${h(D') \leqslant h(D)}$.
This is done by induction on $h(D)$. 

We can assume that $a$ does not occur in $D$
(otherwise replace all occurrences of $a$ in $D$ by 
a parameter not occurring in $D$).

1. If ${h(D)=0}$, then $\mathcal{G}$ is an axiom and so is
\,${\mathcal{G}\,|\, \Gamma \Rightarrow \RePl{A}{x}{a}, \Delta}$.

2. Let the root hypersequent
${\mathcal{G}\,|\,\Gamma \Rightarrow \forall x A, \Delta}$ in $D$ 
be the conclusion of an application $R$ of a rule $\mathcal{R}$.

2.1. Suppose 
the principal formula occurrence in $R$ is the distinguished occurrence of $\forall x A$.
By $D_1$ denote the subtree of the root of $D$;
$D_1$ is a proof for the premise of $R$.
The premise has the form 
\,${\mathcal{G}\,|\, \Gamma \Rightarrow \RePl{A}{x}{a_1}, \Delta}$.
By replacing all occurrences of $a_1$ in $D_1$ by $a$, 
we get a proof $D'$ for
\,${\mathcal{G}\,|\, \Gamma \Rightarrow \RePl{A}{x}{a}, \Delta}$\,
with ${h(D') < h(D)}$. 

2.2. Next, suppose 
the principal formula occurrence in $R$ is not the distinguished occurrence of $\forall x A$.

2.2.1. If $\mathcal{R}$ is one-premise, the proof $D$ looks like this:
\begin{center}
\def\ScoreOverhang{0pt}
\AxiomC{$D_1$} \noLine 
\UnaryInfC{$\mathcal{G}_1 \,|\, \Gamma_1 \Rightarrow \forall x A, \Delta_1$}
\RightLabel{\,$\mathcal{R}$.}
\UnaryInfC{$\mathcal{G} \,|\, \Gamma \Rightarrow \forall x A, \Delta$}
\DisplayProof
\end{center}

Using the induction hypothesis, we transform the proof $D_1$ into 
a proof $D_1'$ for
\,${\mathcal{G}_1 \,|\, \Gamma_1 \Rightarrow \RePl{A}{x}{a}, \Delta_1}$\,
such that ${h(D_1') \leqslant h(D_1)}$.
By applying $\mathcal{R}$ to the root hypersequent of the proof $D_1'$,
we have a proof $D'$ for
\,${\mathcal{G}\,|\, \Gamma \Rightarrow \RePl{A}{x}{a}, \Delta}$\,
with ${h(D') \leqslant h(D)}$.

2.2.2. If $\mathcal{R}$ is two-premise, the proof $D$ looks like this:
\begin{center}
\def\ScoreOverhang{0pt}
\AxiomC{$D_1$} \noLine 
\UnaryInfC{$\mathcal{G}_1 \,|\, \Gamma_1 \Rightarrow \forall x A, \Delta_1$}
\AxiomC{$D_2$} \noLine 
\UnaryInfC{$\mathcal{G}_2 \,|\, \Gamma_2 \Rightarrow \forall x A, \Delta_2$}
\RightLabel{\,$\mathcal{R}$.}
\BinaryInfC{$\mathcal{G} \,|\, \Gamma \Rightarrow \forall x A, \Delta$}
\DisplayProof
\end{center}

For each ${i=1,2}$, by the induction hypothesis, 
we transform the proof $D_i$ into a proof $D_i'$ for
\,${\mathcal{G}_i \,|\, \Gamma_i \Rightarrow \RePl{A}{x}{a}, \Delta_i}$\,
such that ${h(D_i') \leqslant h(D_i)}$.

By applying $\mathcal{R}$ to the root hypersequents of 
the proofs $D_1'$ and $D_2'$, we obtain a proof $D'$ for
\,${\mathcal{G}\,|\, \Gamma \Rightarrow \RePl{A}{x}{a}, \Delta}$\, 
with \,${h(D') \leqslant h(D)}$.

\smallskip\noindent\textbf{IV.}
The hp-invertibility of the rule ${(\exists\Rightarrow)^3}$ 
is established very similarly to
the hp-invertibility of the rule ${(\Rightarrow\forall)^3}$, see item~III.
\end{proof}

\noindent\textbf{Remark 1.}
We know the following about whether the inference rules of 
the calculus $\GtwoLukA$ are invertible in it.
The rules ${(\forall\Rightarrow)^2}$ and ${(\Rightarrow\exists)^2}$ are 
hp-invertible because, for each of them, its premise includes its conclusion.
Using arguments like those given in the proof of 
Lemma \ref{GLem:InvGthreeRules}, we can establish the hp-invertibility of
the rules ${(\Rightarrow\to)^2}$, ${(\Rightarrow\forall)^2}$, and
${(\exists\Rightarrow)^2}$.
However, we do not know whether the rule ${(\to\Rightarrow)^2}$ is invertible.

\begin{lemma} \label{GLem:ecAdmGthree}
The following rule is hp-admissible for the calculus $\GthreeLukA$:
$$\dfrac{\mathcal{G}\,|\,S\,|\,S}{\mathcal{G}\,|\,S}~\text{\textnormal{(ec)}}^3.$$
\end{lemma}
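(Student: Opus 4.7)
The plan is to prove the hp-admissibility of $\text{(ec)}^3$ by induction on $h(D)$ for a given $\GthreeLukA$-proof $D$ of $\mathcal{G}\,|\,S\,|\,S$, producing a proof $D^\circ$ of $\mathcal{G}\,|\,S$ with $h(D^\circ) \leqslant h(D)$. The base case $h(D)=0$ is immediate from the definition of an axiom: for every $M$ and $\nu$, the atomic sequent witnessing that $\mathcal{G}\,|\,S\,|\,S$ is an axiom lies either in $\mathcal{G}$ or is $S$ itself, and in either case it still appears in $\mathcal{G}\,|\,S$, making the latter an axiom as well.

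For the induction step, let the root of $D$ be obtained by an application $R$ of a rule $\mathcal{R}$. If the principal sequent of $R$ lies in $\mathcal{G}$, the two copies of $S$ are carried verbatim into each premise of $R$ as mere context. Applying the induction hypothesis to each subproof (each of height strictly less than $h(D)$) and then reapplying $\mathcal{R}$ produces the required $D^\circ$.

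The substantive case is when the principal sequent of $R$ is one of the two copies of $S$, so that the other copy is preserved verbatim in every premise of $R$. I would apply the hp-invertibility of $\mathcal{R}$ (Lemma \ref{GLem:InvGthreeRules}) to this preserved copy of $S$ in each premise, using the same proper term that $R$ used when $\mathcal{R}$ is $(\forall\Rightarrow)^3$ or $(\Rightarrow\exists)^3$, and a fresh proper semipropositional variable or parameter. After then renaming this fresh object to the one introduced by $R$ itself, each premise becomes a hypersequent in which every ``subsequent'' produced from $S$ by $\mathcal{R}$ occurs twice. Repeated invocations of the induction hypothesis collapse those duplicates, and a single closing application of $\mathcal{R}$ restores the principal formula and yields $\mathcal{G}\,|\,S$. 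For the two-premise case $(\Rightarrow\to)^3$, one invokes the ``left-premise'' inversion on the first subproof and the ``right-premise'' inversion on the second to recover the two premises needed by the reapplication.

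The main obstacle is precisely the coordination of proper parameters and proper semipropositional variables between the inversion step and the original application $R$. The freshness side conditions on hp-invertibility force the inversion to use a newly-chosen symbol, distinct from the one used by $R$; to enable contraction one must then rename that fresh symbol back to $R$'s choice, possibly after a harmless preliminary renaming of internal proper objects in the subproof. Here the \emph{repetition-free} character of $\GthreeLukA$ is essential: each rule leaves just a single new copy of its proper symbol in the premise, so the renaming has a unique target and introduces no collisions elsewhere. Heights are preserved throughout, since hp-invertibility and each use of the induction hypothesis yield proofs of height at most $h(D)-1$, and the closing rule application restores the bound to $h(D)$.
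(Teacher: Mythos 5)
Your proposal is correct and follows essentially the same route as the paper: induction on $h(D)$, with the key case handled by applying the hp-invertibility of $\mathcal{R}$ (Lemma \ref{GLem:InvGthreeRules}) to the surviving copy of $S$, renaming the freshly introduced proper object to match the one chosen by $R$, contracting the resulting duplicated components via repeated uses of the induction hypothesis, and closing with one application of $\mathcal{R}$. The paper writes out only the $(\forall\Rightarrow)^3$ case and declares the rest similar; your remarks on the two-premise rule and on the height bookkeeping fill in exactly what that similarity amounts to.
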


\begin{proof}
We show that a proof $D$ for a hypersequent of the form
\,${\mathcal{G}\,|\,S\,|\,S}$\, can be transformed into 
a proof $\widehat{D}$ for
\,${\mathcal{G}\,|\,S}$\, with ${h(\widehat{D}) \leqslant h(D)}$.
We proceed by induction on $h(D)$.

1. If ${h(D)=0}$, then the hypersequents \,${\mathcal{G}\,|\,S\,|\,S}$\, and
\,${\mathcal{G}\,|\,S}$\, are axioms.

2. Let the root hypersequent \,${\mathcal{G}\,|\,S\,|\,S}$\, in $D$ 
be the conclusion of an application $R$ of a rule $\mathcal{R}$.

2.1. If the principal sequent occurrence in $R$ is not one of 
the two occurrences of $S$ distinguished in \,${\mathcal{G}\,|\,S\,|\,S}$,\,
then we apply the induction hypothesis to the proof for each premise of $R$
and next use $\mathcal{R}$ to obtain the desired proof for
\,${\mathcal{G}\,|\,S}$.

2.2. Otherwise, we are to treat each inference rule of $\GthreeLukA$ 
as $\mathcal{R}$.
However, all these cases are similar to one another. 
So we treat only the case where $\mathcal{R}$ is ${(\forall\Rightarrow)^3}$.
Then the proof $D$ has the form: 
\begin{center}
\def\ScoreOverhang{0pt}
\AxiomC{$D_1$} \noLine 
\UnaryInfC{$\mathcal{G} \,|\, \Gamma, \SpV{p} \Rightarrow \Delta \,|\, \forall x A \Rightarrow \SpV{p} \,|\, \RePl{A}{x}{t} \Rightarrow \SpV{p} \,|\, \Gamma, \forall x A \Rightarrow \Delta$}
\RightLabel{$(\forall\Rightarrow)^3$.}
\UnaryInfC{$\mathcal{G} \,|\, \Gamma, \forall x A \Rightarrow \Delta \,|\, \Gamma, \forall x A \Rightarrow \Delta$}
\DisplayProof
\end{center}

Since the rule ${(\forall\Rightarrow)^3}$ is hp-invertible
(see Lemma \ref{GLem:InvGthreeRules}), 
given the proof $D_1$, we can find a proof $D_1'$ for
$$\mathcal{G} \,|\, \Gamma, \SpV{p} \Rightarrow \Delta \,|\, \forall x A \Rightarrow \SpV{p} \,|\, \RePl{A}{x}{t} \Rightarrow \SpV{p} \,|\, \Gamma, \SpV{p}_1 \Rightarrow \Delta \,|\, \forall x A \Rightarrow \SpV{p}_1 \,|\, \RePl{A}{x}{t} \Rightarrow \SpV{p}_1,$$
where $\SpV{p}_1$ does not occur in the root hypersequent of $D_1$ 
and ${h(D_1') \leqslant h(D_1)}$.

Replacing all occurrences of $\SpV{p}_1$ in $D_1'$ by $\SpV{p}$ yields
a proof $D_1''$ for
$$\mathcal{G} \,|\, \Gamma, \SpV{p} \Rightarrow \Delta \,|\, \forall x A \Rightarrow \SpV{p} \,|\, \RePl{A}{x}{t} \Rightarrow \SpV{p} \,|\, \Gamma, \SpV{p} \Rightarrow \Delta \,|\, \forall x A \Rightarrow \SpV{p} \,|\, \RePl{A}{x}{t} \Rightarrow \SpV{p}\,;$$
whence using the induction hypothesis three times, 
we get a proof $\widetilde{D}_1$ for
$$\mathcal{G} \,|\ \Gamma, \SpV{p} \Rightarrow \Delta \,|\, \forall x A \Rightarrow \SpV{p} \,|\, \RePl{A}{x}{t} \Rightarrow \SpV{p}$$
such that ${h(\widetilde{D}_1) \leqslant h(D_1'') \leqslant h(D_1)}$.

Finally, by applying ${(\forall\Rightarrow)^3}$ to the root hypersequent of 
the proof $\widetilde{D}_1$, we obtain the desired proof $\widehat{D}$ for
\,${\mathcal{G} \,|\, \Gamma, \forall x A \Rightarrow \Delta}$\,
with ${h(\widehat{D}) \leqslant h(D)}$.
\end{proof}

\begin{lemma}  \label{GLem:GoneRulesAdmInGthree}
Each inference rule of the calculus $\GoneLukA$ is admissible for 
the calculus $\GthreeLukA$.
\end{lemma}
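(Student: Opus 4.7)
The plan is to proceed by case analysis on the inference rules of $\GoneLukA$. For each such rule $\mathcal{R}$, I assume $\GthreeLukA$-proofs of its premises are given and show how to transform them into a $\GthreeLukA$-proof of its conclusion. The principal tools are the hp-invertibility of every inference rule of $\GthreeLukA$ established in Lemma \ref{GLem:InvGthreeRules}, together with the hp-admissibility of the structural rules (ew), (split), and (at$\Rightarrow$at) from Lemma \ref{GLem:AtEwSplitAdmGthree} and of (ec) from Lemma \ref{GLem:ecAdmGthree}.

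Because $\GoneLukA$ is the cumulative variant of $\GtwoLukA$, each of its inference rules differs from the corresponding $\GthreeLukA$-rule essentially by retaining the principal formula occurrence in every premise. For the propositional rules $(\to\Rightarrow)^1$ and $(\Rightarrow\to)^1$, as well as for the quantifier rules $(\Rightarrow\forall)^1$ and $(\exists\Rightarrow)^1$ whose structure closely mirrors that of their $\GthreeLukA$-counterparts, the strategy is direct: first strip the redundant principal formula occurrences from each given premise proof by applying the hp-invertibility of the corresponding $\GthreeLukA$-rule, then apply that $\GthreeLukA$-rule to obtain the target conclusion. Parameter-freshness conditions transfer without issue, possibly after renaming proper parameters in the given proofs.

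The main obstacle will be the rules $(\forall\Rightarrow)^1$ and $(\Rightarrow\exists)^1$, since their $\GthreeLukA$-counterparts $(\forall\Rightarrow)^3$ and $(\Rightarrow\exists)^3$ use fresh semipropositional variables that are absent from $\GoneLukA$. Consider $(\forall\Rightarrow)^1$ with premise of the form $\mathcal{G}\,|\,\Gamma, \forall x A, \RePl{A}{x}{t} \Rightarrow \Delta$ and conclusion $\mathcal{G}\,|\,\Gamma, \forall x A \Rightarrow \Delta$. I plan to conclude the target proof with an application of $(\forall\Rightarrow)^3$, reducing the task to deriving its premise $\mathcal{G}\,|\,\Gamma, \SpV{p} \Rightarrow \Delta \,|\, \forall x A \Rightarrow \SpV{p} \,|\, \RePl{A}{x}{t} \Rightarrow \SpV{p}$ for a fresh $\SpV{p}$. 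This hypersequent is to be obtained from the given premise proof by inserting copies of $\SpV{p}$ on both sides via (at$\Rightarrow$at), distributing the formulas $\forall x A$ and $\RePl{A}{x}{t}$ across separate sequents via (split), collapsing duplicated sequents via (ec), and padding with (ew) as needed; getting these manipulations to line up correctly is the technical heart of the argument. The case of $(\Rightarrow\exists)^1$ is handled dually, and any axioms of $\GoneLukA$ are axioms of $\GthreeLukA$ by the broad definition of $\GthreeLukA$-axioms.
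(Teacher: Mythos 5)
There is a genuine gap in exactly the place you identify as the technical heart: you have the wrong form of the rules ${(\forall\Rightarrow)^1}$ and ${(\Rightarrow\exists)^1}$. In $\GoneLukA$ the premise of ${(\forall\Rightarrow)^1}$ is not \,${\mathcal{G}\,|\,\Gamma, \forall x A, \RePl{A}{x}{t} \Rightarrow \Delta}$;\, it is \,${\mathcal{G}\,|\,\Gamma, \forall x A \Rightarrow \Delta\,|\,\Gamma, \RePl{A}{x}{t} \Rightarrow \Delta}$,\, i.e., the instance is placed in a \emph{new hypersequent component} that duplicates $\Gamma$ and $\Delta$, while the conclusion component is retained unchanged (that is what cumulativity amounts to here). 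This matters: with your premise the plan cannot be carried out. To feed ${(\forall\Rightarrow)^3}$ you must reach \,${\mathcal{G}\,|\,\Gamma, \SpV{p} \Rightarrow \Delta\,|\,\forall x A \Rightarrow \SpV{p}\,|\,\RePl{A}{x}{t} \Rightarrow \SpV{p}}$,\, but starting from the single component \,${\Gamma, \forall x A, \RePl{A}{x}{t} \Rightarrow \Delta}$\, every application of $\text{\textnormal{(at$\Rightarrow$at)}}^3$ adds $\SpV{p}$ to \emph{both} sides, and $\text{\textnormal{(split)}}^3$ only redistributes members already present; so any splitting leaves $\forall x A$ or $\RePl{A}{x}{t}$ glued to the $\Gamma$-component, or an unwanted extra $\SpV{p}$ in some antecedent, and no admissible rule deletes a formula from one side of a sequent. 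With the correct premise the manipulations do line up: apply $\text{\textnormal{(at$\Rightarrow$at)}}^3$ and $\text{\textnormal{(split)}}^3$ to the \emph{new} component \,${\Gamma, \RePl{A}{x}{t} \Rightarrow \Delta}$\, to obtain \,${\Gamma, \SpV{p} \Rightarrow \Delta\,|\,\RePl{A}{x}{t} \Rightarrow \SpV{p}}$,\, add \,${\forall x A \Rightarrow \SpV{p}}$\, by $\text{\textnormal{(ew)}}^3$, apply ${(\forall\Rightarrow)^3}$ (which recreates \,${\Gamma, \forall x A \Rightarrow \Delta}$), and contract against the retained copy with $\text{\textnormal{(ec)}}^3$; dually for ${(\Rightarrow\exists)^1}$.

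A secondary remark: for the other four rules the paper uses no invertibility at all. Since each $\GoneLukA$-premise is exactly the corresponding $\GthreeLukA$-premise with the retained conclusion-sequent absorbed into the side hypersequent $\mathcal{G}$, one simply applies the $\GthreeLukA$-rule (producing a duplicated component) and contracts with $\text{\textnormal{(ec)}}^3$. Your invert-then-reapply route for those four rules can be made to work, but only after renaming the fresh semipropositional variable or parameter produced by the inversion and then contracting the duplicated components, so it is both underspecified as written and an unnecessary detour.
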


\begin{proof}
An application of the rule ${(\to\Rightarrow)^1}$, ${(\Rightarrow\to)^1}$,
${(\Rightarrow\forall)^1}$, or ${(\exists\Rightarrow)^1}$ of $\GoneLukA$ 
can be represented as an application of the corresponding rule of $\GthreeLukA$ 
followed by an appplication of the rule $\text{\textnormal{(ec)}}^3$.
E.g., an application of ${(\Rightarrow\to)^1}$ is represented thus:
\begin{center}
\def\ScoreOverhang{0pt}
\AxiomC{$\mathcal{G} \,|\, \Gamma \Rightarrow A\to B, \Delta \,|\, \Gamma \Rightarrow \Delta$;}
\AxiomC{$\mathcal{G} \,|\, \Gamma \Rightarrow A\to B, \Delta \,|\, \Gamma, A \Rightarrow B, \Delta$}
\RightLabel{$(\Rightarrow\to)^3$}
\BinaryInfC{$\mathcal{G} \,|\, \Gamma \Rightarrow A\to B, \Delta \,|\, \Gamma \Rightarrow A\to B, \Delta$}
\RightLabel{$\text{\textnormal{(ec)}}^3$.}
\UnaryInfC{$\mathcal{G} \,|\, \Gamma \Rightarrow A\to B, \Delta$}
\DisplayProof
\end{center}
By Lemma \ref{GLem:ecAdmGthree}, the rule $\text{\textnormal{(ec)}}^3$ 
is admissible for $\GthreeLukA$.
So these four rules of $\GoneLukA$ are admissible for $\GthreeLukA$.

The rule ${(\forall\Rightarrow)^1}$ is admissible for $\GthreeLukA$, since
an application of it can be represented as several applications of rules,
which are admissible for $\GthreeLukA$ 
(by Lemmas \ref{GLem:AtEwSplitAdmGthree} and \ref{GLem:ecAdmGthree}),
as follows:
\begin{center}
\def\ScoreOverhang{0pt}
\AxiomC{$\mathcal{G} \,|\, \Gamma, \forall x A \Rightarrow \Delta \,|\, \Gamma, \RePl{A}{x}{t} \Rightarrow \Delta$}
\RightLabel{$\text{(at$\Rightarrow$at)}^3$}
\UnaryInfC{$\mathcal{G} \,|\, \Gamma, \forall x A \Rightarrow \Delta \,|\, \Gamma, \RePl{A}{x}{t}, \SpV{p} \Rightarrow \SpV{p}, \Delta$}
\RightLabel{$\text{(split)}^3$}
\UnaryInfC{$\mathcal{G} \,|\, \Gamma, \forall x A \Rightarrow \Delta \,|\, \Gamma, \SpV{p} \Rightarrow \Delta \,|\, \RePl{A}{x}{t} \Rightarrow \SpV{p}$}
\RightLabel{$\text{(ew)}^3$}
\UnaryInfC{$\mathcal{G} \,|\, \Gamma, \forall x A \Rightarrow \Delta \,|\, \Gamma, \SpV{p} \Rightarrow\Delta \,|\, \forall x A \Rightarrow \SpV{p} \,|\, \RePl{A}{x}{t} \Rightarrow \SpV{p}$}
\RightLabel{$(\forall\Rightarrow)^3$}
\UnaryInfC{$\mathcal{G} \,|\, \Gamma, \forall x A \Rightarrow \Delta \,|\, \Gamma, \forall x A \Rightarrow \Delta$}
\RightLabel{$\text{\textnormal{(ec)}}^3$,}
\UnaryInfC{$\mathcal{G} \,|\, \Gamma, \forall x A \Rightarrow \Delta$}
\DisplayProof
\end{center}
where $\SpV{p}$ does not occur in the top hypersequent.

The rule ${(\Rightarrow\exists)^1}$ is treated similarly to
${(\forall\Rightarrow)^1}$.
\end{proof}

\begin{theorem}  \label{GTh:GOneProvGthreeProv}
Suppose $\mathcal{H}$ is a hypersequent of the calculus $\GoneLukA$.
If \ ${\vdash_{\GoneLukA} \mathcal{H}}$, then
\,${\vdash_{\GthreeLukA} \mathcal{H}}$.
\end{theorem}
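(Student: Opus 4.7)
The natural strategy is to induct on the height of a given $\GoneLukA$-proof $D$ of $\mathcal{H}$ and simulate each step in $\GthreeLukA$. The key ingredients have already been assembled: Lemma~\ref{GLem:GoneRulesAdmInGthree} states that every inference rule of $\GoneLukA$ is admissible for $\GthreeLukA$, and by the conventions set up in the paper every atomic sequent involved in $\GoneLukA$ uses only semipropositional variables of type~1, which are also present in $\GthreeLukA$.

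For the base case, I would observe that $D$ consists of a single axiom of $\GoneLukA$. The notion of axiom in $\GoneLukA$ (inherited from \cite{Ger2017}) is defined by the same semantic criterion as that of $\GthreeLukA$: a hypersequent is an axiom if, under every hs-interpretation and every valuation, it contains a true atomic sequent. Since every $\GoneLukA$-hypersequent is also a $\GthreeLukA$-hypersequent and the truth values of atomic sequents do not depend on type-0 semipropositional variables (which do not occur in $\mathcal{H}$), any $\GoneLukA$-axiom is automatically a $\GthreeLukA$-axiom and hence $\GthreeLukA$-provable.

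For the induction step, let the root of $D$ be the conclusion of an application of some $\GoneLukA$-rule $\mathcal{R}$ with premises $\mathcal{H}_1,\ldots,\mathcal{H}_k$; the immediate subproofs $D_1,\ldots,D_k$ have smaller heights, so the induction hypothesis yields $\GthreeLukA$-proofs of $\mathcal{H}_1,\ldots,\mathcal{H}_k$. Lemma~\ref{GLem:GoneRulesAdmInGthree}, applied to $\mathcal{R}$ and these $\GthreeLukA$-proofs, delivers the desired $\GthreeLukA$-proof of $\mathcal{H}$.

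The argument is almost bookkeeping once Lemma~\ref{GLem:GoneRulesAdmInGthree} is in hand; the only point that requires minor care is the base case, where one has to verify that the semantic definition of an axiom agrees between the two calculi notwithstanding the introduction of type-0 semipropositional variables in the larger class of hs-interpretations used by $\GthreeLukA$. Since an atomic sequent built from $\GoneLukA$-formulas takes exactly the same truth values under both notions of hs-interpretation once one restricts to the variables occurring in it, this match is immediate.
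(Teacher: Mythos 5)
Your proposal is correct and follows essentially the same route as the paper, which proves the theorem in one line by noting that all $\GoneLukA$-axioms are $\GthreeLukA$-axioms and invoking Lemma~\ref{GLem:GoneRulesAdmInGthree}; your explicit induction on proof height is just the standard unfolding of that argument. The extra care you take in the base case (checking that enlarging hs-interpretations with type-0 semipropositional variables does not affect axiomhood of a hypersequent in which they do not occur) is a sound and worthwhile observation that the paper leaves implicit.
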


\begin{proof}
All axioms of $\GoneLukA$ are axioms of $\GthreeLukA$,
and all the inference rules of $\GoneLukA$ are admissible for $\GthreeLukA$ 
by Lemma~\ref{GLem:GoneRulesAdmInGthree}.
\end{proof}

\section{Transforming $\GthreeLukA$-proofs according to tactics} 
\label{Sec:Transform}

As in \cite[Section 4.3]{Ger2017}, 
to organize bottom-up $\GthreeLukA$-proof search, 
we can use an auxiliary algorithm $\mathfrak{t}$,
called a (\emph{proof search}) \emph{tactic}, 
that takes a proof search tree $D$ as input and returns either

(a) the message $\mathfrak{t}(D)$ indicating that 
no leaf hypersequent of $D$ contains any logical symbol, or

(b) a non-atomic $\RPLA$-formula occurrence $\mathfrak{t}(D)$
(as a sequent member) in a leaf hypersequent of $D$.

By a result of a backward rule application to a proof search tree $D$ 
\emph{according to a tactic} $\mathfrak{t}$, we mean
$D$ if $\mathfrak{t}(D)$ is not a formula occurrence;
otherwise, a proof search tree obtained from $D$ by a backward application of
a (uniquely determined) rule of $\GthreeLukA$ to the occurrence $\mathfrak{t}(D)$.
We say that a proof search tree (in particular, a proof) $D$ for $\mathcal{H}$ 
can be constructed \emph{according to a tactic} $\mathfrak{t}$ 
if $D$ can be obtained from $\mathcal{H}$ by a finite number of 
backward rule applications according to $\mathfrak{t}$.

For a tactic $\mathfrak{t}$ and a hypersequent $\mathcal{H}$,
let $\mathcal{D}^{\mathfrak{t}}_\mathcal{H}$ be a tree obtained 
from $\mathcal{H}$ by an infinite number of backward applications
according to $\mathfrak{t}$.
Call a tactic $\mathfrak{t}$ \emph{fair} if,
for each hypersequent $\mathcal{H}$, 
each branch $\mathcal{B}$ of the tree $\mathcal{D}^{\mathfrak{t}}_\mathcal{H}$,
and each non-atomic $\RPLA$-formula occurrence $\mathcal{F}$ 
(as a sequent member) on $\mathcal{B}$, 
there is a backward application to some ancestor of $\mathcal{F}$ 
on $\mathcal{B}$.

Now we state a theorem that allows us to justify the use of any fair tactic
for bottom-up proof search.

\begin{theorem}  \label{GTh:RearrangingGthreeProofAccordTactics}
Suppose $\mathcal{G}$ is a $\GthreeLukA$-provable hypersequent, and
$\mathfrak{t}$ is a fair tactic.
Then some $\GthreeLukA$-proof of $\mathcal{G}$ can be constructed
according to $\mathfrak{t}$.
\end{theorem}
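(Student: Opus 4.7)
My plan is to prove this by well-founded induction on the height $h(D)$ of an arbitrary $\GthreeLukA$-proof $D$ of $\mathcal{G}$, using the hp-invertibility of every rule of $\GthreeLukA$ (Lemma~\ref{GLem:InvGthreeRules}) as the main tool.

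The crux is a step rearranging lemma, which I would first establish: for any $\GthreeLukA$-proof $D$ of a hypersequent $\mathcal{G}$ and any non-atomic formula occurrence $\mathcal{F}$ in $\mathcal{G}$, there is a $\GthreeLukA$-proof $D'$ of $\mathcal{G}$ with $h(D') \leqslant h(D)$ whose last rule application has $\mathcal{F}$ as principal formula occurrence. The proof of this lemma would proceed by induction on $h(D)$: if the last application of $D$ already has $\mathcal{F}$ as principal, take $D'=D$; otherwise, apply the induction hypothesis to the immediate subproof(s) of $D$ to bring the rule on $\mathcal{F}$ to the bottom of each, and then commute the two concluding applications using Lemma~\ref{GLem:InvGthreeRules}, renaming proper semipropositional variables and proper parameters as necessary to preserve all side conditions.

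Given the step rearranging lemma, the theorem follows. Apply $\mathfrak{t}$ to the one-node tree $\mathcal{G}$. If the result is the ``no-logical-symbols'' message, then $\mathcal{G}$ is atomic; being $\GthreeLukA$-provable and inaccessible by any inference rule, it must be an axiom, so the zero-step tree $\mathcal{G}$ is the desired tactic-constructed proof. Otherwise $\mathfrak{t}(\mathcal{G})=\mathcal{F}$ for some non-atomic $\mathcal{F}$; rearrange $D$ into a proof $D'$ whose bottom application is on $\mathcal{F}$, and carry out the corresponding backward rule application using the same proper term or proper parameter as in $D'$. This produces premises $\mathcal{G}_1,\dots,\mathcal{G}_k$ whose subproofs in $D'$ are $\GthreeLukA$-proofs of height strictly less than $h(D')\leqslant h(D)$. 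Apply the induction hypothesis to each of these subproofs, with the tactic understood as operating on the growing whole tree (so that $\mathfrak{t}$'s choices on the incrementally extended tree continue to guide the construction), and paste the resulting tactic-guided subproofs together.

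I expect the main difficulty to lie in proving the step rearranging lemma itself: specifically, in handling the two-premise rule $(\Rightarrow\to)^3$ during commutations, and in tracking side conditions on proper semipropositional variables and proper parameters after rearrangement (so that the freshness requirements on $(\to\Rightarrow)^3$, $(\forall\Rightarrow)^3$, $(\Rightarrow\exists)^3$, $(\Rightarrow\forall)^3$, and $(\exists\Rightarrow)^3$ are still met). The repetition-freeness of $\GthreeLukA$'s rules is essential throughout, since it is exactly what guaranteed the hp-invertibility in Lemma~\ref{GLem:InvGthreeRules} that makes such rearrangement possible in the first place.
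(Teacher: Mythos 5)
Your overall strategy---a step-rearranging lemma obtained from hp-invertibility, followed by an induction on proof height---is the same as the paper's, but your induction measure does not work, and the gap is exactly where fairness has to enter. Hp-invertibility (Lemma~\ref{GLem:InvGthreeRules}) gives, for each premise of the backward application on $\mathcal{F}$, a proof of height \emph{at most} $h(D)$, not at most $h(D)-1$. Consequently the rearranged proof $D'$ may have height $h(D)+1$, and its immediate subproofs may have height exactly $h(D)$; your claim that those subproofs have height strictly less than $h(D)$ is unjustified, and the induction on $h(D)$ stalls. The paper's Lemma~\ref{GLem:StepRearrangingGthreeProof} is deliberately weaker than your step-rearranging lemma: it bounds only the heights of the subproofs ($h(\widehat{D}_i)\leqslant h(D)$), notes that a strict drop is guaranteed only when $\mathcal{F}$ already was the principal occurrence of the lowest application (assertion (2)), and adds assertion (3), which records that otherwise the formerly lowest application ``rides up'' one level and its principal formula survives as an ancestor in each premise.

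That tracking is what lets fairness do real work, and your argument never uses fairness essentially---if your height claim were true, \emph{any} tactic would succeed. But the theorem fails for unfair tactics: a tactic that forever selects (an ancestor of) the same antecedent occurrence of $\forall x A$ keeps applying $(\forall\Rightarrow)^3$ backward, since each premise again contains an occurrence of $\forall x A$, and the construction never terminates. The paper therefore works with a global measure $H(D)$, the maximal height of the ``nontransformed parts'' of the current proof, and at each stage performs enough backward applications (finitely many, by fairness) that every branch eventually selects the ancestor of the formula that is principal at the root of each nontransformed part; only at that moment does assertion (2) of Lemma~\ref{GLem:StepRearrangingGthreeProof} yield the strict decrease $H(\widetilde{D})<H(D)$. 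Some version of this bookkeeping is unavoidable; the per-subproof induction with the tactic ``understood as operating on the growing whole tree'' does not supply it.
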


Before proving this theorem, we establish the following lemma, which
helps us to make one step in transforming a $\GthreeLukA$-proof
according to a tactic.

\begin{lemma}  \label{GLem:StepRearrangingGthreeProof}
Suppose $D$ is a $\GthreeLukA$-proof for $\mathcal{H}$, and
$\mathcal{F}$ is a non-atomic $\RPLA$-for\-mula occurrence 
\textnormal{(}as a sequent member\textnormal{)} in $\mathcal{H}$.
Then a $\GthreeLukA$-proof $\widehat{D}$ of the form

\medskip
\begin{center}
\def\ScoreOverhang{0pt}
\AxiomC{$\widehat{D}_1$} \noLine
\UnaryInfC{$\widehat{\mathcal{H}}_1$} 
\RightLabel{\qquad or \quad}
\UnaryInfC{$\mathcal{H}$}
\DisplayProof
\def\ScoreOverhang{0pt}
\AxiomC{$\widehat{D}_1$} \noLine
\UnaryInfC{$\widehat{\mathcal{H}}_1$}
\AxiomC{$\widehat{D}_2$} \noLine
\UnaryInfC{$\widehat{\mathcal{H}}_2$}
\BinaryInfC{$\mathcal{H}$}
\DisplayProof
\end{center}         
\medskip
can be constructed such that:

$(1)$ 
$\mathcal{F}$ is the principal formula occurrence 
in the lowest backward application in $\widehat{D}$, 
and \,${h(\widehat{D}_i) \leqslant h(D)}$ for each $i$;

$(2)$ 
if $\mathcal{F}$ is the principal formula occurrence 
in the lowest backward application in $D$, 
then $\widehat{D}$ is the same as $D$; 

$(3)$ 
if ${h(D)>0}$ and 
the principal formula occurrence $\mathcal{F}_0$ 
in the lowest backward application in $D$ differs from $\mathcal{F}$, then,
for each $i$, the ancestor of $\mathcal{F}_0$ in $\widehat{\mathcal{H}}_i$ 
is the princilal formula occurrence 
in the lowest backward application in $\widehat{D}_i$.\footnote{
  Roughly speaking, the lowest backward application in $D$
  goes one level up in $\widehat{D}$.
}
\end{lemma}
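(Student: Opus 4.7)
The plan is to split according to whether $\mathcal{F}$ is already the principal formula occurrence in the lowest backward application in $D$; the trivial case gives $\widehat{D} = D$, and the nontrivial one uses the hp-invertibility of the rules of $\GthreeLukA$ (Lemma \ref{GLem:InvGthreeRules}) to permute rule applications so that the one acting on $\mathcal{F}$ moves to the bottom.

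\emph{Case 1.} If $h(D) > 0$ and $\mathcal{F}$ is the principal formula occurrence in the lowest backward application in $D$, I set $\widehat{D} := D$. Then property~(1) is immediate, property~(2) holds by construction, and property~(3) is vacuously true since its hypothesis $\mathcal{F}_0 \neq \mathcal{F}$ fails.

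\emph{Case 2.} Otherwise --- which subsumes $h(D)=0$, in which case $\mathcal{H}$ is an axiom and the hypothesis of~(2) is vacuous. Let $\mathcal{R}_\mathcal{F}$ be the unique rule of $\GthreeLukA$ whose schema fits a backward application at $\mathcal{F}$ in $\mathcal{H}$; it is determined by the outermost logical symbol of $\mathcal{F}$ together with whether $\mathcal{F}$ stands in the antecedent or succedent of its sequent. By Lemma \ref{GLem:InvGthreeRules}, $\mathcal{R}_\mathcal{F}$ is hp-invertible, so from $D$ I can construct proofs $\widehat{D}_i$ of the corresponding premises $\widehat{\mathcal{H}}_i$ with $h(\widehat{D}_i) \leq h(D)$. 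Reapplying $\mathcal{R}_\mathcal{F}$ beneath the $\widehat{\mathcal{H}}_i$ then yields the desired $\widehat{D}$, giving property~(1) at once; when the hypothesis of (2) fails, (2) itself is again vacuous.

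The delicate point is property~(3) in Case~2 under the additional hypothesis $h(D)>0$. Then $D$ ends with an application of some rule $R_0$ whose principal occurrence $\mathcal{F}_0$ differs from $\mathcal{F}$. Since the schema of $\mathcal{R}_\mathcal{F}$ alters only (parts of) the sequent containing $\mathcal{F}$ and preserves every other formula occurrence, a well-defined ancestor of $\mathcal{F}_0$ sits in each $\widehat{\mathcal{H}}_i$. To see that this ancestor is principal at the bottom of $\widehat{D}_i$, I would inspect the construction in the proof of Lemma \ref{GLem:InvGthreeRules}: in each of its cases where the principal formula of the last rule is not the one being inverted (cases~2.2.1 and~2.2.2 for the rules $(\to\Rightarrow)^3$, $(\Rightarrow\to)^3$, $(\Rightarrow\forall)^3$, $(\exists\Rightarrow)^3$, and the analogous constructions for $(\forall\Rightarrow)^3$ and $(\Rightarrow\exists)^3$ that factor through the admissible rules of Lemma \ref{GLem:AtEwSplitAdmGthree}), the construction recurses into the subproofs of $D$ and then reapplies the original bottom rule $R_0$. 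Consequently, each $\widehat{D}_i$ ends with an application of $R_0$ whose principal occurrence is exactly the ancestor of $\mathcal{F}_0$, which is property~(3). The main obstacle is precisely this bookkeeping: property~(3) depends on the explicit form of the invertibility witness rather than on its bare statement, so one must verify uniformly across all six rules of $\GthreeLukA$ that the construction neither moves nor collapses the bottommost rule of $D$ in the nontrivial cases.
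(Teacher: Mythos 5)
Your proposal is correct and follows essentially the same route as the paper: take $\widehat{D}=D$ in the trivial case, otherwise invoke the hp-invertibility construction of Lemma \ref{GLem:InvGthreeRules} to obtain the $\widehat{D}_i$ and reapply the rule, then verify assertion (3) by inspecting that construction (including the structural-rule-based inversions of $(\forall\Rightarrow)^3$ and $(\Rightarrow\exists)^3$, which preserve every rule application of $D$). The paper's own proof is just a terser version of this same argument.
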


\begin{proof}
If $\mathcal{F}$ is the principal formula occurrence 
in the lowest backward application in $D$, then
we immediately take $D$ as $\widehat{D}$, and 
assertions (1)--(3) of the lemma clearly hold.

Suppose $\mathcal{F}$ is not the principal formula occurrence 
in the lowest backward application in $D$.
Then assertion (2) of the lemma is trivially true. 
Let $\mathcal{R}$ be the only inference rule that can be applied backward
to the occurrence $\mathcal{F}$ in $\mathcal{H}$.

Using the construction in the proof of the hp-invertibility of $\mathcal{R}$ 
(see Lemma \ref{GLem:InvGthreeRules}), from 
the proof $D$ for $\mathcal{H}$,
we construct proofs $\widehat{D}_i$ (${i=1}$ or ${i=1,2}$) 
for all the premises of a backward application of $\mathcal{R}$
to the occurrence $\mathcal{F}$ in $\mathcal{H}$, 
and we have ${h(\widehat{D}_i) \leqslant h(D)}$.

Now, by applying $\mathcal{R}$ to the root hypersequents of
the proofs $\widehat{D}_i$, we obtain a proof $\widehat{D}$ of $\mathcal{H}$
for which assertion (1) of the lemma holds.

After examining the construction in the proof of the hp-invertibility of
$\mathcal{R}$ (see Lemma \ref{GLem:InvGthreeRules}), we are sure that 
$\widehat{D}$ satisfies assertion (3) of the lemma being proved.
\end{proof}

\emph{Proof} of Theorem~\ref{GTh:RearrangingGthreeProofAccordTactics}.
Fix a $\GthreeLukA$-proof $D_0$ for $\mathcal{G}$ and transform 
it according to $\mathfrak{t}$ in stages.
The result of each stage will be some $\GthreeLukA$-proof $D$ 
for $\mathcal{G}$ consisting of

(a) a proof search tree $D^\mathfrak{t}$ that 
has the common root with $D$
and is constructed according to $\mathfrak{t}$,
and which is called the \emph{transformed part} of $D$, as well as

(b) a finite number of proof trees whose roots are leaves of $D^\mathfrak{t}$,
and each of which is called a \emph{nontransformed part} of $D$.

Define the transformed part of the initial proof $D_0$ to be its root, and
the only nontransformed part of it to be $D_0$ itself.

We use induction on the maximal height $H(D)$
of the nontransformed parts of the current proof $D$ being transformed.

1. If ${H(D) = 0}$, then $D$ is the required proof. 

2. Suppose ${H(D) > 0}$ and $D^\mathfrak{t}$ is the transformed part of $D$.

2.1. To obtain a proof $\widetilde{D}$ 
(with its transformed part $\widetilde{D}^\mathfrak{t}$) 
as a result of the stage, we carry out 
some finite number $N$ of backward applications to the transformed part
of the current proof (which is $D$ initially) 
according to the fair tactic $\mathfrak{t}$.
We choose such a number $N$ so that, 
for each branch $\mathcal{B}$ of $\widetilde{D}^\mathfrak{t}$
and each non-atomic $\RPLA$-formula occurrence $\mathcal{F}$
(as a sequent member) 
in the node of $\widetilde{D}^\mathfrak{t}$ that
was a leaf of $D^\mathfrak{t}$ and is on $\mathcal{B}$ now, 
there is a backward application to some ancestor of $\mathcal{F}$ 
on $\mathcal{B}$.

2.2.
We carry out each backward application 
to a formula occurrence $\mathcal{F}$ (chosen by $\mathfrak{t}$)
in a leaf of the transformed part $\mathcal{D}^\mathfrak{t}$ 
of the current proof $\mathcal{D}$ for $\mathcal{G}$ as follows. 
Let $\mathcal{D}^\mathfrak{n}$ be 
the nontransformed part of $\mathcal{D}$ whose root is this leaf, 
and $\mathcal{H}$ be the root hypersequent of $\mathcal{D}^\mathfrak{n}$.
By Lemma \ref{GLem:StepRearrangingGthreeProof},
given the proof $\mathcal{D}^\mathfrak{n}$ and the occurrence $\mathcal{F}$ 
in $\mathcal{H}$, we construct a proof $\widehat{\mathcal{D}}^\mathfrak{n}$ 
of the form

\medskip
\begin{center}
\def\ScoreOverhang{0pt}
\AxiomC{$\widehat{\mathcal{D}}^\mathfrak{n}_1$} \noLine
\UnaryInfC{$\widehat{\mathcal{H}}_1$} 
\RightLabel{\qquad or \quad}
\UnaryInfC{$\mathcal{H}$}
\DisplayProof
\def\ScoreOverhang{0pt}
\AxiomC{$\widehat{\mathcal{D}}^\mathfrak{n}_1$} \noLine
\UnaryInfC{$\widehat{\mathcal{H}}_1$}
\AxiomC{$\widehat{\mathcal{D}}^\mathfrak{n}_2$} \noLine
\UnaryInfC{$\widehat{\mathcal{H}}_2$}
\BinaryInfC{$\mathcal{H}$}
\DisplayProof
\end{center}
\medskip
such that:

$(1^{\ref{GLem:StepRearrangingGthreeProof}})$ 
$\mathcal{F}$ is the principal formula occurrence 
in the lowest backward application in $\widehat{\mathcal{D}}^\mathfrak{n}$, and 
\,${h(\widehat{\mathcal{D}}^\mathfrak{n}_i) \leqslant h(\mathcal{D}^\mathfrak{n})}$
for each $i$;

$(2^{\ref{GLem:StepRearrangingGthreeProof}})$ 
if $\mathcal{F}$ is the principal formula occurrence 
in the lowest backward application in $\mathcal{D}^\mathfrak{n}$, then 
$\widehat{\mathcal{D}}^\mathfrak{n}$ is the same as $\mathcal{D}^\mathfrak{n}$,
and hence 
${h(\widehat{\mathcal{D}}^\mathfrak{n}_i) < h(\mathcal{D}^\mathfrak{n})}$
for each $i$;

$(3^{\ref{GLem:StepRearrangingGthreeProof}})$ 
if ${h(\mathcal{D}^\mathfrak{n})>0}$ and 
the principal formula occurrence $\mathcal{F}_0$ 
in the lowest backward application in $\mathcal{D}^\mathfrak{n}$ 
differs from $\mathcal{F}$, then,
for each $i$, the ancestor of $\mathcal{F}_0$ in $\widehat{\mathcal{H}}_i$ 
is the princilal formula occurrence 
in the lowest backward application in $\widehat{\mathcal{D}}^\mathfrak{n}_i$.

Next, we replace the subtree $\mathcal{D}^\mathfrak{n}$ in $\mathcal{D}$ 
by $\widehat{\mathcal{D}}^\mathfrak{n}$.
Finally,  
the lowest backward application in $\widehat{\mathcal{D}}^\mathfrak{n}$ 
is included in the transformed part of the resulting proof for $\mathcal{G}$.
Thereby from $\mathcal{D}^\mathfrak{n}$ we obtain
one or two new nontransformed parts:
$\widehat{\mathcal{D}}^\mathfrak{n}_1$ or
$\widehat{\mathcal{D}}^\mathfrak{n}_1$ and $\widehat{\mathcal{D}}^\mathfrak{n}_2$.

2.3. Clearly, under the given transformation of $D$ into $\widetilde{D}$,
each nontransformed part $\widetilde{D}^\mathfrak{n}$ of $\widetilde{D}$ 
is obtaned from some nontransformed part $D^\mathfrak{n}$ of $D$.
If ${h(D^\mathfrak{n}) = 0}$, then it is obvious that 
${h(\widetilde{D}^\mathfrak{n}) = 0}$.

Suppose ${h(D^\mathfrak{n}) > 0}$.
By item 2.1 and assertion $(3^{\ref{GLem:StepRearrangingGthreeProof}})$, 
when we transform $D$ into $\widetilde{D}$,
we carry out so many backward applications that
the premise of assertion $(2^{\ref{GLem:StepRearrangingGthreeProof}})$ holds 
for at least one backward application performed 
in the passage from $D^\mathfrak{n}$ to $\widetilde{D}^\mathfrak{n}$. 
Therefore ${h(\widetilde{D}^\mathfrak{n}) < h(D^\mathfrak{n})}$. 

Thus ${H(\widetilde{D}) < H(D)}$.
By the induction hypothesis applied to $\widetilde{D}$, 
we construct a proof of $\mathcal{G}$ according to $\mathfrak{t}$.
\hfill$\Box$

\section{The mid-hypersequent theorem for $\GthreeLukA$ and its consequences}
\label{Sec:MidHS}

We say that a $\GthreeLukA$-proof is a \emph{mid-hypersequent} proof
if in it all applications of propositional rules are above 
all applications of quantifier rules.

To transform some $\GthreeLukA$-proofs into mid-hypersequent ones, 
we will use the following properties (P1--P4), which express
permutability of adjacent rule applications.
In each of these properties, 
the resulting proof is displayed after the inital one.
From now on, 
if a formula (or sequent) occurrence in the conclusion of a rule application
is in boldface, 
then the occurrence is the principal one in the application.
Properties P1--P4 can be verified in a straightforward way.

\smallskip\textbf{P1.} 
Let $\mathcal{R}_1$ and $\mathcal{R}_2$ be any one-premise inference rules of
$\GthreeLukA$, except the case where
\,${\mathcal{R}_1 \in \{\, (\Rightarrow\forall)^3,\ (\exists\Rightarrow)^3 \,\}}$\,
and
\,${\mathcal{R}_2 \in \{\, (\Rightarrow\exists)^3,\ (\forall\Rightarrow)^3 \,\}}$.

If $\mathcal{R}_1$ is ${(\to\Rightarrow)^3}$,
${(\forall\Rightarrow)^3}$, or ${(\exists\Rightarrow)^3}$, 
\,$\mathcal{R}_2$ is ${(\Rightarrow\forall)^3}$ or ${(\Rightarrow\exists)^3}$, 
and the above case is excluded,
then we can perform the following transformation:
\medskip
\begin{center}
\def\ScoreOverhang{0pt}
\AxiomC{$D$} \noLine
\UnaryInfC{$\mathcal{G} \,|\, \Gamma, F_1 \Rightarrow F_2, \Delta \,|\, \mathcal{H}_1 \,|\, \mathcal{H}_2$} 
\RightLabel{$\mathcal{R}_2$}
\UnaryInfC{$\mathcal{G} \,|\, \Gamma, F_1 \Rightarrow \boldsymbol{A_2}, \Delta \,|\, \mathcal{H}_1$}
\RightLabel{$\mathcal{R}_1$}
\UnaryInfC{$\mathcal{G} \,|\, \Gamma, \boldsymbol{A_1} \Rightarrow A_2, \Delta$}
\DisplayProof
\qquad 
\def\ScoreOverhang{0pt}
\AxiomC{$D$} \noLine
\UnaryInfC{$\mathcal{G} \,|\, \Gamma, F_1 \Rightarrow F_2, \Delta \,|\, \mathcal{H}_1 \,|\, \mathcal{H}_2$}
\RightLabel{$\mathcal{R}_1$}
\UnaryInfC{$\mathcal{G} \,|\, \Gamma, \boldsymbol{A_1} \Rightarrow F_2, \Delta \,|\, \mathcal{H}_2$}
\RightLabel{$\mathcal{R}_2$}
\UnaryInfC{$\mathcal{G} \,|\, \Gamma, A_1 \Rightarrow \boldsymbol{A_2}, \Delta$}
\DisplayProof
\end{center}
\medskip

For a hypersequent that is at the bottom of an appropriate initial proof 
and has the form 
\begin{center}
${\mathcal{G} \,|\, \Gamma, A_1, A_2 \Rightarrow \Delta}$, \quad
${\mathcal{G} \,|\, \Gamma \Rightarrow A_1, A_2, \Delta}$, \quad or \quad
${\mathcal{G} \,|\, \Gamma, A_2 \Rightarrow A_1, \Delta}$,
\end{center}
we can carry out a transformation similar to that just given. 

E.g., if $\mathcal{R}_1$ is ${(\to\Rightarrow)^3}$ and
$\mathcal{R}_2$ is ${(\Rightarrow\forall)^3}$, then
the initial and resulting proofs look like:
\medskip
\begin{center} 
\def\ScoreOverhang{0pt}
\AxiomC{$D$} \noLine
\UnaryInfC{$\mathcal{G} \,|\, \Gamma, \SpV{p} \Rightarrow \RePl{C}{x}{a}, \Delta \,|\, B \Rightarrow \SpV{p}, A$} 
\RightLabel{${(\Rightarrow\forall)^3}$}
\UnaryInfC{$\mathcal{G} \,|\, \Gamma, \SpV{p} \Rightarrow \boldsymbol{\forall x C}, \Delta \,|\, B \Rightarrow \SpV{p}, A$}
\RightLabel{${(\to\Rightarrow)^3}$}
\UnaryInfC{$\mathcal{G} \,|\, \Gamma, \boldsymbol{A\to B} \Rightarrow \forall x C, \Delta$}
\DisplayProof
\qquad 
\def\ScoreOverhang{0pt}
\AxiomC{$D$} \noLine
\UnaryInfC{$\mathcal{G} \,|\, \Gamma, \SpV{p} \Rightarrow \RePl{C}{x}{a}, \Delta \,|\, B \Rightarrow \SpV{p}, A$}
\RightLabel{${(\to\Rightarrow)^3}$}
\UnaryInfC{$\mathcal{G} \,|\, \Gamma, \boldsymbol{A\to B} \Rightarrow \RePl{C}{x}{a}, \Delta$}
\RightLabel{${(\Rightarrow\forall)^3}$}
\UnaryInfC{$\mathcal{G} \,|\, \Gamma, A\to B \Rightarrow \boldsymbol{\forall x C}, \Delta$}
\DisplayProof
\end{center}
\medskip

\smallskip\textbf{P2.} 
Let rules $\mathcal{R}_1$ and $\mathcal{R}_2$ be as in the first paragraph of P1.
Then we can perform this transformation:
\medskip
\begin{center}
\def\ScoreOverhang{0pt}
\AxiomC{$D$} \noLine
\UnaryInfC{$\mathcal{G}\,|\,\mathcal{H}_1\,|\,\mathcal{H}_2$} 
\RightLabel{$\mathcal{R}_2$}
\UnaryInfC{$\mathcal{G}\,|\,\mathcal{H}_1\,|\,\boldsymbol{S_2}$} 
\RightLabel{$\mathcal{R}_1$}
\UnaryInfC{$\mathcal{G}\,|\,\boldsymbol{S_1}\,|\,S_2$}
\DisplayProof
\qquad 
\def\ScoreOverhang{0pt}
\AxiomC{$D$} \noLine
\UnaryInfC{$\mathcal{G}\,|\,\mathcal{H}_1\,|\,\mathcal{H}_2$} 
\RightLabel{$\mathcal{R}_1$}
\UnaryInfC{$\mathcal{G}\,|\,\boldsymbol{S_1}\,|\,\mathcal{H}_2$} 
\RightLabel{$\mathcal{R}_2$}
\UnaryInfC{$\mathcal{G}\,|\,S_1\,|\,\boldsymbol{S_2}$}
\DisplayProof
\end{center}
\medskip

E.g., if $\mathcal{R}_1$ is ${(\to\Rightarrow)^3}$ and
$\mathcal{R}_2$ is ${(\Rightarrow\forall)^3}$, then
the initial and resulting proofs have the forms:
\medskip
\begin{center} 
\def\ScoreOverhang{0pt}
\AxiomC{$D$} \noLine
\UnaryInfC{$\mathcal{G} \,|\, \Gamma_1, \SpV{p} \Rightarrow \Delta_1 \,|\, B \Rightarrow \SpV{p}, A \,|\, \Gamma_2 \Rightarrow \RePl{C}{x}{a}, \Delta_2$}
\RightLabel{${(\Rightarrow\forall)^3}$}
\UnaryInfC{$\mathcal{G} \,|\, \Gamma_1, \SpV{p} \Rightarrow \Delta_1 \,|\, B \Rightarrow \SpV{p}, A \,|\, \Gamma_2 \Rightarrow \boldsymbol{\forall x C}, \Delta_2$}
\RightLabel{${(\to\Rightarrow)^3}$}
\UnaryInfC{$\mathcal{G} \,|\, \Gamma_1, \boldsymbol{A\to B} \Rightarrow \Delta_1 \,|\, \Gamma_2 \Rightarrow \forall x C, \Delta_2$}
\DisplayProof
\end{center}
\medskip
\begin{center} 
\def\ScoreOverhang{0pt}
\AxiomC{$D$} \noLine
\UnaryInfC{$\mathcal{G} \,|\, \Gamma_1, \SpV{p} \Rightarrow \Delta_1 \,|\, B \Rightarrow \SpV{p}, A \,|\, \Gamma_2 \Rightarrow \RePl{C}{x}{a}, \Delta_2$} 
\RightLabel{${(\to\Rightarrow)^3}$}
\UnaryInfC{$\mathcal{G} \,|\, \Gamma_1, \boldsymbol{A\to B} \Rightarrow \Delta_1 \,|\, \Gamma_2 \Rightarrow \RePl{C}{x}{a}, \Delta_2$}
\RightLabel{${(\Rightarrow\forall)^3}$}
\UnaryInfC{$\mathcal{G} \,|\, \Gamma_1, A\to B \Rightarrow \Delta_1 \,|\, \Gamma_2 \Rightarrow \boldsymbol{\forall x C}, \Delta_2$}
\DisplayProof
\end{center}
\medskip

\smallskip\textbf{P3.} 
Let $\mathcal{R}$ be any one-premise inference rule of $\GthreeLukA$.

If $\mathcal{R}$ is ${(\to\Rightarrow)^3}$, ${(\forall\Rightarrow)^3}$, or
${(\exists\Rightarrow)^3}$, then under the conditions stated below,
we can carry out the following transformation:
\medskip
\begin{center}
\def\ScoreOverhang{0pt}
\AxiomC{$D_1$} \noLine
\UnaryInfC{$\mathcal{G} \,|\, \Gamma, F_1 \Rightarrow \Delta \,|\, \mathcal{H}_1$}
\LeftLabel{$\mathcal{R}$}
\UnaryInfC{$\mathcal{G} \,|\, \Gamma, \boldsymbol{C} \Rightarrow \Delta$}
\AxiomC{$D_2$} \noLine
\UnaryInfC{$\mathcal{G} \,|\, \Gamma, F_2, A \Rightarrow B, \Delta \,|\, \mathcal{H}_2$}
\RightLabel{$\mathcal{R}$}
\UnaryInfC{$\mathcal{G} \,|\, \Gamma, \boldsymbol{C}, A \Rightarrow B, \Delta$}
\RightLabel{$(\Rightarrow\to)^3$}
\BinaryInfC{$\mathcal{G} \,|\, \Gamma, C \Rightarrow \boldsymbol{A\to B}, \Delta$}
\DisplayProof
\end{center}
\medskip
\begin{center}
\def\ScoreOverhang{0pt}
\AxiomC{$D_1'$} \noLine
\UnaryInfC{$\mathcal{G} \,|\, \Gamma, F_2 \Rightarrow \Delta \,|\, \mathcal{H}_2$}
\AxiomC{$D_2$} \noLine
\UnaryInfC{$\mathcal{G} \,|\, \Gamma, F_2, A \Rightarrow B, \Delta \,|\, \mathcal{H}_2$}
\RightLabel{$(\Rightarrow\to)^3$}
\BinaryInfC{$\mathcal{G} \,|\, \Gamma, F_2 \Rightarrow \boldsymbol{A\to B}, \Delta \,|\, \mathcal{H}_2$}
\RightLabel{$\mathcal{R}$}
\UnaryInfC{$\mathcal{G} \,|\, \Gamma, \boldsymbol{C} \Rightarrow A\to B, \Delta$}
\DisplayProof
\end{center}
\medskip

If $\mathcal{R}$ is ${(\Rightarrow\forall)^3}$ or ${(\Rightarrow\exists)^3}$, 
then under the conditions stated below,
we can perform a similar transformation with a bottom hypersequent of the form
\,$\mathcal{G} \,|\, \Gamma \Rightarrow C, A\to B, \Delta$.

For both the transformations, two conditions must hold.
First, 
if $\mathcal{R}$ is ${(\forall\Rightarrow)^3}$ or ${(\Rightarrow\exists)^3}$, 
then the proper terms of the three displayed 
applications of $\mathcal{R}$ are the same.
Second, we construct the proof $D_1'$ thus:

(a) Suppose $\mathcal{R}$ is
${(\exists\Rightarrow)^3}$ or ${(\Rightarrow\forall)^3}$,
\,$a_1$ and $a_2$ are the proper parameters of the two applications 
of $\mathcal{R}$ displayed in the initial proof
on the left and right, respectively;
then: ${D_1' = D_1}$ if ${a_1 = a_2}$; otherwise,
we obtain the proof $\widetilde{D}_1$ (for the root hypersequent of $D_1$)
from $D_1$ by replacing all occurrences of $a_2$ with 
a parameter not occurring in $D_1$, and
next, we get the required proof $D_1'$ from $\widetilde{D}_1$ 
by replacing all occurrences of $a_1$ with $a_2$.

(b) If $\mathcal{R}$ is ${(\to\Rightarrow)^3}$, ${(\forall\Rightarrow)^3}$, or
${(\Rightarrow\exists)^3}$, then
we obtain $D_1'$ from $D_1$ as in (a), but instead of parameters, 
we use semipropositional variables of the type corresponding to the rule
$\mathcal{R}$.

E.g., if $\mathcal{R}$ is ${(\forall\Rightarrow)^3}$, 
then the initial and resulting proofs look like:
\medskip
\begin{center} 
\def\ScoreOverhang{0pt}
\AxiomC{$D_1$} \noLine
\UnaryInfC{$\genfrac{}{}{0pt}{}{\displaystyle \mathcal{G} \,|\, \Gamma, \SpV{p}_1 \Rightarrow \Delta}
            {\displaystyle |\, \forall x C \Rightarrow \SpV{p}_1 \,|\, \RePl{C}{x}{t} \Rightarrow \SpV{p}_1}$}
\LeftLabel{${(\forall\Rightarrow)^3}$}
\UnaryInfC{$\mathcal{G} \,|\, \Gamma, \boldsymbol{\forall x C} \Rightarrow \Delta$}
\AxiomC{$D_2$} \noLine
\UnaryInfC{$\genfrac{}{}{0pt}{}{\displaystyle \mathcal{G} \,|\, \Gamma, \SpV{p}_2, A \Rightarrow B, \Delta}
            {\displaystyle |\, \forall x C \Rightarrow \SpV{p}_2 \,|\, \RePl{C}{x}{t} \Rightarrow \SpV{p}_2}$}
\RightLabel{${(\forall\Rightarrow)^3}$}
\UnaryInfC{$\mathcal{G} \,|\, \Gamma, \boldsymbol{\forall x C}, A \Rightarrow B, \Delta$}
\RightLabel{$(\Rightarrow\to)^3$}
\BinaryInfC{$\mathcal{G} \,|\, \Gamma, \forall x C \Rightarrow \boldsymbol{A\to B}, \Delta$}
\DisplayProof
\end{center}
\medskip
\begin{center} 
\def\ScoreOverhang{0pt}
\AxiomC{$D_1'$} \noLine
\UnaryInfC{$\genfrac{}{}{0pt}{}{\displaystyle \mathcal{G} \,|\, \Gamma, \SpV{p}_2 \Rightarrow \Delta}
            {\displaystyle |\, \forall x C \Rightarrow \SpV{p}_2 \,|\, \RePl{C}{x}{t} \Rightarrow \SpV{p}_2}$}
\AxiomC{$D_2$} \noLine
\UnaryInfC{$\genfrac{}{}{0pt}{}{\displaystyle \mathcal{G} \,|\, \Gamma, \SpV{p}_2, A \Rightarrow B, \Delta}
            {\displaystyle |\, \forall x C \Rightarrow \SpV{p}_2 \,|\, \RePl{C}{x}{t} \Rightarrow \SpV{p}_2}$}
\RightLabel{$(\Rightarrow\to)^3$}
\BinaryInfC{$\mathcal{G} \,|\, \Gamma, \SpV{p}_2 \Rightarrow \boldsymbol{A\to B}, \Delta \,|\, \forall x C \Rightarrow \SpV{p}_2 \,|\, \RePl{C}{x}{t} \Rightarrow \SpV{p}_2$}
\RightLabel{${(\forall\Rightarrow)^3}$}
\UnaryInfC{$\mathcal{G} \,|\, \Gamma, \boldsymbol{\forall x C} \Rightarrow A\to B, \Delta$}
\DisplayProof
\end{center}
\medskip

\smallskip\textbf{P4.} 
Let $\mathcal{R}$ be any one-premise inference rule of $\GthreeLukA$.
Then we can carry out the transformation:
\medskip
\begin{center}
\def\ScoreOverhang{0pt}
\AxiomC{$D_1$} \noLine
\UnaryInfC{$\mathcal{G} \,|\, \mathcal{H}_{1,1} \,|\, \mathcal{H}_{2,0}$} 
\LeftLabel{$\mathcal{R}$}
\UnaryInfC{$\mathcal{G} \,|\, \mathcal{H}_{1,1} \,|\, \boldsymbol{S_2}$} 
\AxiomC{$D_2$} \noLine
\UnaryInfC{$\mathcal{G} \,|\, \mathcal{H}_{1,2} \,|\, \mathcal{H}_2$} 
\RightLabel{$\mathcal{R}$}
\UnaryInfC{$\mathcal{G} \,|\, \mathcal{H}_{1,2} \,|\, \boldsymbol{S_2}$} 
\RightLabel{$(\Rightarrow\to)^3$}
\BinaryInfC{$\mathcal{G} \,|\, \boldsymbol{S_1} \,|\, S_2$}
\DisplayProof
\quad 
\def\ScoreOverhang{0pt}
\AxiomC{$D_1'$} \noLine
\UnaryInfC{$\mathcal{G} \,|\, \mathcal{H}_{1,1} \,|\, \mathcal{H}_2$}
\AxiomC{$D_2$} \noLine
\UnaryInfC{$\mathcal{G} \,|\, \mathcal{H}_{1,2} \,|\, \mathcal{H}_2$}
\RightLabel{$(\Rightarrow\to)^3$}
\BinaryInfC{$\mathcal{G} \,|\, \boldsymbol{S_1} \,|\, \mathcal{H}_2$} 
\RightLabel{$\mathcal{R}$}
\UnaryInfC{$\mathcal{G} \,|\, S_1 \,|\, \boldsymbol{S_2}$}
\DisplayProof
\end{center}
\medskip
Here all the principal formula occurrences in the three displayed 
applications of $\mathcal{R}$ represent the same formula;
in the case 
where $\mathcal{R}$ is ${(\forall\Rightarrow)^3}$ or ${(\Rightarrow\exists)^3}$,
the additional condition is the same as in P3;
and the proof $D_1'$ is constructed from $D_1$ as in P3.

E.g., if $\mathcal{R}$ is ${(\Rightarrow\forall)^3}$, then
the initial and resulting proofs have the forms:
\medskip
\begin{center} 
\def\ScoreOverhang{0pt}
\AxiomC{$D_1$} \noLine
\UnaryInfC{$\mathcal{G} \,|\, \Gamma_1 \Rightarrow \Delta_1 \,|\, \Gamma_2 \Rightarrow \RePl{C}{x}{a_1}, \Delta_2$} 
\LeftLabel{${(\Rightarrow\forall)^3}$}
\UnaryInfC{$\mathcal{G} \,|\, \Gamma_1 \Rightarrow \Delta_1 \,|\, \Gamma_2 \Rightarrow \boldsymbol{\forall x C}, \Delta_2$} 
\AxiomC{$D_2$} \noLine
\UnaryInfC{$\mathcal{G} \,|\, \Gamma_1,A \Rightarrow B,\Delta_1 \,|\, \Gamma_2 \Rightarrow \RePl{C}{x}{a_2}, \Delta_2$}
\RightLabel{${(\Rightarrow\forall)^3}$}
\UnaryInfC{$\mathcal{G} \,|\, \Gamma_1,A \Rightarrow B,\Delta_1 \,|\, \Gamma_2 \Rightarrow \boldsymbol{\forall x C}, \Delta_2$} 
\RightLabel{$(\Rightarrow\to)^3$}
\BinaryInfC{$\mathcal{G} \,|\, \Gamma_1 \Rightarrow \boldsymbol{A\to B}, \Delta_1 \,|\, \Gamma_2 \Rightarrow \forall x C, \Delta_2$}
\DisplayProof
\end{center}
\medskip
\begin{center} 
\def\ScoreOverhang{0pt}
\AxiomC{$D_1'$} \noLine
\UnaryInfC{$\mathcal{G} \,|\, \Gamma_1 \Rightarrow \Delta_1 \,|\, \Gamma_2 \Rightarrow \RePl{C}{x}{a_2}, \Delta_2$}
\AxiomC{$D_2$} \noLine
\UnaryInfC{$\mathcal{G} \,|\, \Gamma_1,A \Rightarrow B,\Delta_1 \,|\, \Gamma_2 \Rightarrow \RePl{C}{x}{a_2}, \Delta_2$}
\RightLabel{$(\Rightarrow\to)^3$}
\BinaryInfC{$\mathcal{G} \,|\, \Gamma_1 \Rightarrow \boldsymbol{A\to B}, \Delta_1 \,|\, \Gamma_2 \Rightarrow \RePl{C}{x}{a_2}, \Delta_2$} 
\RightLabel{${(\Rightarrow\forall)^3}$}
\UnaryInfC{$\mathcal{G} \,|\, \Gamma_1 \Rightarrow A\to B, \Delta_1 \,|\, \Gamma_2 \Rightarrow \boldsymbol{\forall x C}, \Delta_2$}
\DisplayProof
\end{center}
\smallskip

\begin{theorem}[the mid-hypersequent theorem for $\GthreeLukA$]
\label{GTh:MidHsGthree}
Let $\mathcal{H}$ be a hypersequent in which 
each member of each sequent is a prenex $\RPLA$-formula or 
a semipropositional variable.
Then any $\GthreeLukA$-proof $D$ for $\mathcal{H}$ 
can be transformed into a mid-hyper\-sequent $\GthreeLukA$-proof $\widehat{D}$ 
for $\mathcal{H}$;
moreover, ${Q(\widehat{D}) \leqslant Q(D)}$, where $Q(\mathcal{D})$ is
the number of quantifier rule applications in 
a $\GthreeLukA$-proof $\mathcal{D}$. 
\end{theorem}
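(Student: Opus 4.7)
The plan is to induct on
\[ M(D) := |\{(R_p, R_q) : R_p \text{ a propositional rule application in } D,\ R_q \text{ a quantifier rule application in } D,\ R_q \text{ above } R_p \text{ in the tree}\}|, \]
noting that a $\GthreeLukA$-proof is mid-hypersequent exactly when $M = 0$. Given $M(D) > 0$, I first observe that there is an \emph{adjacent} bad pair, i.e.\ a propositional $R_p$ with a quantifier $R_q$ directly above it in the tree: along any path witnessing a bad pair, the rule type must flip from propositional to quantifier at some consecutive step. It then suffices to exhibit a local transformation of $D$ at this adjacent pair that strictly decreases $M$ without increasing $Q$.

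If $R_p$ is $(\to\Rightarrow)^3$, both $R_p$ and $R_q$ are one-premise, and the exclusion in P1 (which concerns two quantifier rules whose proper parameter and proper term may clash) does not apply because one of the rules is propositional. Depending on whether the principal formulas of $R_p$ and $R_q$ lie in the same component of $R_p$'s conclusion or in different ones, I invoke P1 or P2 to swap the two applications. Afterwards $(R_p, R_q)$ is no longer bad, the surrounding subtrees are preserved verbatim, and $Q$ is unchanged.

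If $R_p$ is $(\Rightarrow\to)^3$, then $R_q$ sits above exactly one of $R_p$'s premises, say the left, while the other premise is proved by some subproof $D_{\text{right}}$ that need not start with a rule of the form of $R_q$. To prepare for P3 or P4, I would invoke the hp-invertibility of the rule of $R_q$ (Lemma~\ref{GLem:InvGthreeRules}) on $D_{\text{right}}$, after renaming proper parameters or proper semipropositional variables to match what $R_q$ used on the left. This yields a proof $D_{\text{right}}'$ of the ``instantiated'' premise with no more rule applications---and in particular no more quantifier applications---than $D_{\text{right}}$ had. Both premises of $R_p$ are now the conclusions of applications of the same rule with the same principal formula, so P3 or P4 collapses the two upper applications into a single one below $(\Rightarrow\to)^3$. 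The pair $(R_p, R_q)$ disappears; hp-invertibility only removes or leaves unchanged the existing quantifier applications in $D_{\text{right}}$ and never adds any, so $M$ strictly decreases and $Q$ cannot grow (in fact two copies of the quantifier rule merge into one).

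The main obstacle I anticipate is the bookkeeping in this second case: one must choose the proper parameter or proper semipropositional variable for the hp-invertibility step so that it agrees with the choice made by $R_q$ on the left branch and simultaneously respects all freshness conditions inside $D_{\text{right}}'$ as well as in the subsequent P3 or P4 application. The prenex hypothesis on $\mathcal{H}$ is used tacitly throughout: propositional connectives appear only inside the quantifier-free bodies of prenex formulas, so the principal formulas of propositional and quantifier rule applications are always genuinely distinct, and the permutations P1--P4 together with the hp-invertibility lemma cover every configuration that can actually arise.
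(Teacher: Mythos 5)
Your proposal is correct and follows essentially the same route as the paper: the paper's measure $\mathcal{O}(D)$ (the sum, over propositional applications $R$, of the number of quantifier applications above $R$) is exactly your $M(D)$, and the paper likewise handles an offending $(\to\Rightarrow)^3$ via P1/P2 and an offending $(\Rightarrow\to)^3$ by inverting the quantifier rule on the sibling premise (using the prenex hypothesis to locate the principal formula there) and then applying P3/P4. The only organizational difference is that the paper selects a topmost offending propositional application and first makes the sibling subproof mid-hypersequent by the induction hypothesis before inverting, which slightly simplifies the bookkeeping you flag as the main obstacle.
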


\begin{proof}
For a $\GthreeLukA$-proof $\mathcal{D}$ and a propositional rule application $R$ 
in $\mathcal{D}$, let $\mathcal{O}(R)$ be
the number of quantifier rule applications above $R$, 
and $\mathcal{O}(\mathcal{D})$ be the sum of $\mathcal{O}(R)$ 
over all propositional rule applications $R$ in $\mathcal{D}$.

We proceed by induction on $\mathcal{O}(D)$, 
where $D$ is a given proof for $\mathcal{H}$.

1. If ${\mathcal{O}(D) = 0}$, then $D$ is the desired proof.

2. Otherwise, 
choose an application $R_0$ of a (propositional) rule $\mathcal{R}_0$ in $D$
such that ${\mathcal{O}(R_0) > 0}$ and 
no application $R'$ with ${\mathcal{O}(R') > 0}$ is above $R_0$.

2.1. Suppose $\mathcal{R}_0$ is ${(\to\Rightarrow)^3}$.
By $R_1$ denote the (quantifier) rule application
that stands immediately above the application $R_0$.
We permute $R_0$ and $R_1$ using transformation P1 or P2, 
and next, by the induction hypothesis, we obtain the desired proof.

2.2. Now suppose $\mathcal{R}_0$ is ${(\Rightarrow\to)^3}$, and 
the proof for the conclusion $\mathcal{H}_0$ of the application $R_0$ 
looks like:
\begin{center}
\def\ScoreOverhang{0pt}
\AxiomC{$D_1$} \noLine
\UnaryInfC{$\mathcal{H}_1$}
\AxiomC{$D_2$} \noLine
\UnaryInfC{$\mathcal{H}_2$}
\RightLabel{$\mathcal{R}_0$.}
\BinaryInfC{$\mathcal{H}_0$}
\DisplayProof
\end{center}
\smallskip
Then the lowest application in $D_1$ or $D_2$, 
say for definiteness the lowest application $R_2$ in $D_2$, 
is an application of a quantifier rule $\mathcal{R}$.

By the induction hypothesis, we can transform $D_1$ into 
a mid-hypersequent proof $D_1'$ for $\mathcal{H}_1$ 
such that ${Q(D_1') \leqslant Q(D_1)}$.
In the proof $D$ (for $\mathcal{H}$), we replace the subtree $D_1$ by $D_1'$,
thus obtaining a proof $D'$ for $\mathcal{H}$.

Let the principal formula occurrence $\mathcal{F}_2$ in $R_2$
(which is a formula occurrence in $\mathcal{H}_2$)
be an ancestor of an occurrence $\mathcal{F}_0$ in $\mathcal{H}_0$.
The formulas $A$ and $B$ in $\mathcal{H}_2$ that originate from 
the principal occurrence of ${A\to B}$ in $R_0$ are quantifier-free.
Therefore the occurrence $\mathcal{F}_0$ has an ancestor $\mathcal{F}_1$ 
in $\mathcal{H}_1$, and 
all $\mathcal{F}_i$ (${i=0,1,2}$) represent the same formula.

Using the construction in the proof of the hp-invertibility of 
the rule $\mathcal{R}$ (see Lemma \ref{GLem:InvGthreeRules}),
from the proof $D_1'$ for $\mathcal{H}_1$, we construct a proof $D_1''$
for the premise of an application $R_1$ of $\mathcal{R}$ 
with $\mathcal{H}_1$ as the conclusion and 
$\mathcal{F}_1$ as the principal formula occurrence. 
Here if $\mathcal{R}$ is
${(\forall\Rightarrow)^3}$ or ${(\Rightarrow\exists)^3}$, then
the proper term of the application $R_1$ (of $\mathcal{R}$) 
is taken to be the proper term of the application $R_2$ (of $\mathcal{R}$).
Let $D_1'''$ be the proof (for $\mathcal{H}_1$) obtained from the proof $D_1''$ 
by the application $R_1$. 

Given the mid-hypersequent proof $D_1'$, it is not hard to see that
$D_1''$ is also a mid-hypersequent proof
$\bigl($i.e., ${\mathcal{O}(D_1'') = 0}${}$\bigr)$ and
${Q(D_1'') \leqslant Q(D_1')}$. 
Then obviously, 
${\mathcal{O}(D_1''') = 0}$ and ${Q(D_1''') \leqslant Q(D_1')+1}$. 

Next, in the proof $D'$ (for $\mathcal{H}$), 
we replace the subtree $D_1'$ by $D_1'''$ 
and get a proof $D''$ for $\mathcal{H}$.
Using transformation P3 or P4, in $D''$ we permute 
the application $R_0$ (of the two-premise rule $\mathcal{R}_0$) and
the applications $R_1$ and $R_2$ (of the quantifier rule $\mathcal{R}$),
which stand immediately above $R_0$;
and we have a proof $D'''$ for $\mathcal{H}$ as a result.

From ${\mathcal{O}(D_1''') = 0}$,
\,${Q(D_1''') \leqslant Q(D_1')+1 \leqslant Q(D_1)+1}$, and
the forms of transformations P3 and P4,
it follows that 
${\mathcal{O}(D''') < \mathcal{O}(D)}$ and ${Q(D''') \leqslant Q(D)}$.
Then by the induction hypothesis, we can construct the desired proof from $D'''$.
\end{proof}

\noindent\textbf{Remark 2.}
In contrast to Theorem \ref{GTh:MidHsGthree} above,
Theorems 10 and 18 in \cite{Ger2017} 
(i.e., the mid-hypersequent theorems for $\GoneLukA$ and $\GtwoLukA$) 
require an initial hypersequent to be of the form 
\,${\Rightarrow A}$,\, where $A$ is a prenex $\RPLA$-formula.

\begin{theorem} \label{GTh:GoneGtwoGthreeEquivForPrenexFormulas}
Suppose $A$ is a prenex $\RPLA$-formula.
Then the following are equivalent: 
\textnormal{(1)}~${\vdash_{\GoneLukA} A}$,
\,\textnormal{(2)}~${\vdash_{\GtwoLukA} A}$, 
\,\textnormal{(3)}~${\vdash_{\GthreeLukA} A}$.
\end{theorem}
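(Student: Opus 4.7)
The plan is to establish the equivalences via the cycle
$(1) \Rightarrow (3) \Rightarrow (2) \Rightarrow (1)$. The implication
$(1) \Rightarrow (3)$ holds unconditionally by
Theorem \ref{GTh:GOneProvGthreeProv}, and $(2) \Rightarrow (1)$ for prenex
$\RPLA$-sentences follows from \cite{Ger2017} via the mid-hypersequent
theorems for $\GoneLukA$ and $\GtwoLukA$ cited in Remark~2. The new work
lies in $(3) \Rightarrow (2)$.

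Assuming $\vdash_{\GthreeLukA} A$ with $A$ prenex, I would first apply
Theorem \ref{GTh:MidHsGthree}---whose hypothesis is satisfied by the
hypersequent $\Rightarrow A$, since its only member is the prenex formula
$A$---to obtain a mid-hypersequent $\GthreeLukA$-proof $D^\ast$ of
$\Rightarrow A$. In $D^\ast$ every propositional rule application stands
above every quantifier rule application, so $D^\ast$ splits into a lower
quantifier portion that decomposes the prenex prefix of $A$ using only the
four $\GthreeLukA$-quantifier rules, and an upper propositional portion
whose root hypersequents are all quantifier-free.

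Next, I would rebuild $D^\ast$ as a $\GtwoLukA$-proof. The propositional
rules $(\to\Rightarrow)^3$ and $(\Rightarrow\to)^3$ match their
$\GtwoLukA$-counterparts, so the upper portion transfers directly. Among
the quantifier rules in the lower portion, $(\Rightarrow\forall)^3$ and
$(\exists\Rightarrow)^3$ already coincide with their $\GtwoLukA$-versions;
each application of $(\forall\Rightarrow)^3$ or $(\Rightarrow\exists)^3$
would be replaced by the two-step decomposition recorded in the proof of
Lemma \ref{GLem:CorrSemInvGthreeRules}, namely
$(\forall\Rightarrow)^2_0$ followed by $(\text{den}_1)$ in the first case
and $(\Rightarrow\exists)^2_0$ followed by $(\text{den}_0)$ in the second.

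The hard part will be eliminating the density steps $(\text{den}_1)$ and
$(\text{den}_0)$, which are not rules of $\GtwoLukA$, together with the
type~0 semipropositional variables they introduce. The plan here is to
exploit assertions $(\widetilde{5})$ and $(\widetilde{6})$ of
Lemma \ref{GLem:SemGtwoGthreeRules}, which pin down the semantic role of
each fresh semipropositional variable as ranging over a specific real
interval. For each density application one should be able to replace its
fresh variable throughout the subproof by an appropriate value, or by a
formula already occurring in the surrounding hypersequent, so that the
density step becomes redundant. Carrying this out branch by branch,
working downward from the mid-hypersequent, should yield a genuine
$\GtwoLukA$-proof of $\Rightarrow A$ and close the cycle.
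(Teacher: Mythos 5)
Your overall architecture matches the paper's up to the crucial step: the equivalence of (1) and (2) is quoted from \cite{Ger2017}, (1)$\Rightarrow$(3) is Theorem \ref{GTh:GOneProvGthreeProv}, and the real content is (3)$\Rightarrow$(2), which both you and the paper begin by applying Theorem \ref{GTh:MidHsGthree} to $\Rightarrow A$ and separating the lower quantifier portion from the upper propositional portion. (Note that for $\Rightarrow A$ with $A$ prenex only the succedent quantifier rules ever fire, so the only decomposition in play is $(\Rightarrow\exists)^2_0$ followed by $(\text{den}_0)$ with type-0 variables.) From there the routes diverge: the paper builds a parallel $\GtwoLukA$ quantifier proof-search tree $\widetilde{D}_2$ directly, in which the semipropositional variables simply do not appear, whereas you keep them and introduce explicit density steps to be eliminated later.

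That elimination is where the gap is. Removing $(\text{den}_0)$ and the type-0 variables is the entire difficulty of (3)$\Rightarrow$(2), and ``one should be able to replace its fresh variable throughout the subproof by an appropriate value, or by a formula already occurring in the surrounding hypersequent'' is not an argument. Substituting a ``value'' is not a syntactic operation, and no single rational constant can uniformly witness the strict inequalities that assertion $(\widetilde{6})$ of Lemma \ref{GLem:SemGtwoGthreeRules} requires across all interpretations; substituting context formulas is the standard density-elimination move, but density elimination for hypersequent calculi is a substantial theorem in its own right (cf. \cite{MOG2009}), usually leans on structural rules that $\GtwoLukA$ lacks, and is established nowhere in this paper. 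Moreover the propositional portion still contains the type-0 variables, so it does not ``transfer directly'' to $\GtwoLukA$ until after the elimination, and the density applications sit below it, so the elimination would have to rewrite the entire tree. The paper sidesteps all of this semantically: the quantifier-free part $\widetilde{\mathcal{H}}_3$ of the top hypersequent of the quantifier portion is valid by soundness of $\GthreeLukA$ (Theorem \ref{GTh:CorrGthree}); viewing $\widetilde{\mathcal{H}}_3$ as a directed acyclic graph of strict inequalities and using density of the reals to choose values for the $\SpV{q}_i$, one shows that a countermodel of the corresponding $\GtwoLukA$ top hypersequent $\widetilde{\mathcal{H}}_2$ would give a countermodel of $\widetilde{\mathcal{H}}_3$; completeness of $\GtwoLukA$ for quantifier-free hypersequents then supplies the propositional part of the $\GtwoLukA$-proof. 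Without either proving a genuine density-elimination lemma or switching to this soundness-plus-completeness detour, your argument does not close.
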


\begin{proof}
(1) and (2) are equivalent by Theorem 15 in \cite{Ger2017}. 
(3) follows from (1) by Theorem \ref{GTh:GOneProvGthreeProv}.
We will show that (3) implies (2).

In view of Theorem \ref{GTh:MidHsGthree}, it is enough to 
transform any mid-hypersequent $\GthreeLukA$-proof $D_3$ for $A$
into some $\GtwoLukA$-proof for $A$.
Let $\widetilde{D}_3$ be the $\GthreeLukA$-proof search tree for $A$
consisting of all quantifier rule applications in $D_3$;\,
$\mathcal{H}_3$ be the only top hypersequent in $\widetilde{D}_3$;
and $\widetilde{\mathcal{H}}_3$ be the hypersequent that is obtained 
from $\mathcal{H}_3$ by removing all sequents containing quantifiers.

To avoid cumbersome notation,
first we will perform the transformation in the case when
$A$ has the form $\exists x \forall y \exists z B(x,y,z)$
(where $B(x,y,z)$ is a quantfier-free $\RPLA$-formula, 
and $x,y,z$ are distinct variables), and 
$\widetilde{D}_3$ has the form given in Figure \ref{fig:GthreeProofSearchTree};
then we will explain why a similar transformation can be carried out 
in the general case.
The result of simultaneously replacing all occurrences of $x,y,z$ in $B(x,y,z)$ 
with terms $s_1,s_2,s_3$, respectively, is denoted by $B(s_1,s_2,s_3)$. 

\begin{figure}[!t]
\begin{center} \small
\def\ScoreOverhang{0pt}
\AxiomC{$\genfrac{}{}{0pt}{}{
 \genfrac{}{}{0pt}{}{\displaystyle \Rightarrow \SpV{q}_1 \,|\, \SpV{q}_1 \Rightarrow \SpV{q}_3 \,|\, \SpV{q}_3 \Rightarrow \exists x \forall y \exists z B(x,y,z) \,|\, \SpV{q}_3 \Rightarrow \SpV{q}_4} 
                    {\displaystyle |\, \SpV{q}_4 \Rightarrow \SpV{q}_5 \,|\, \SpV{q}_5 \Rightarrow \exists z B(t_3,a_2,z) \,|\, \SpV{q}_5 \Rightarrow B(t_3,a_2,t_5)}
 }{\displaystyle |\, \SpV{q}_4 \Rightarrow B(t_3,a_2,t_4) \,|\, \SpV{q}_1 \Rightarrow \SpV{q}_2 \,|\, \SpV{q}_2 \Rightarrow \exists z B(t_1,a_1,z) \,|\, \SpV{q}_2 \Rightarrow B(t_1,a_1,t_2)}$}
\RightLabel{$(\Rightarrow\exists)^3$}
\UnaryInfC{$\genfrac{}{}{0pt}{}{\displaystyle \Rightarrow \SpV{q}_1 \,|\, \SpV{q}_1 \Rightarrow \SpV{q}_3 \,|\, \SpV{q}_3 \Rightarrow \exists x \forall y \exists z B(x,y,z) \,|\, \SpV{q}_3 \Rightarrow \SpV{q}_4 \,|\, \SpV{q}_4 \Rightarrow \boldsymbol{\exists z B(t_3,a_2,z)}}{\displaystyle |\, \SpV{q}_4 \Rightarrow B(t_3,a_2,t_4) \,|\, \SpV{q}_1 \Rightarrow \SpV{q}_2 \,|\, \SpV{q}_2 \Rightarrow \exists z B(t_1,a_1,z) \,|\, \SpV{q}_2 \Rightarrow B(t_1,a_1,t_2)}$}
\RightLabel{$(\Rightarrow\exists)^3$}
\UnaryInfC{$\genfrac{}{}{0pt}{}{\displaystyle \Rightarrow \SpV{q}_1 \,|\, \SpV{q}_1 \Rightarrow \SpV{q}_3 \,|\, \SpV{q}_3 \Rightarrow \exists x \forall y \exists z B(x,y,z) \,|\, \SpV{q}_3 \Rightarrow \boldsymbol{\exists z B(t_3,a_2,z)}}{\displaystyle |\, \SpV{q}_1 \Rightarrow \SpV{q}_2 \,|\, \SpV{q}_2 \Rightarrow \exists z B(t_1,a_1,z) \,|\, \SpV{q}_2 \Rightarrow B(t_1,a_1,t_2)}$}
\RightLabel{$(\Rightarrow\forall)^3$}
\UnaryInfC{$\genfrac{}{}{0pt}{}{\displaystyle \Rightarrow \SpV{q}_1 \,|\, \SpV{q}_1 \Rightarrow \SpV{q}_3 \,|\, \SpV{q}_3 \Rightarrow \exists x \forall y \exists z B(x,y,z) \,|\, \SpV{q}_3 \Rightarrow \boldsymbol{\forall y \exists z B(t_3,y,z)}}{\displaystyle |\, \SpV{q}_1 \Rightarrow \SpV{q}_2 \,|\, \SpV{q}_2 \Rightarrow \exists z B(t_1,a_1,z) \,|\, \SpV{q}_2 \Rightarrow B(t_1,a_1,t_2)}$}
\RightLabel{$(\Rightarrow\exists)^3$}
\UnaryInfC{$\genfrac{}{}{0pt}{}{\displaystyle \Rightarrow \SpV{q}_1 \,|\, \SpV{q}_1 \Rightarrow \boldsymbol{\exists x \forall y \exists z B(x,y,z)}}{\displaystyle |\, \SpV{q}_1 \Rightarrow \SpV{q}_2 \,|\, \SpV{q}_2 \Rightarrow \exists z B(t_1,a_1,z) \,|\, \SpV{q}_2 \Rightarrow B(t_1,a_1,t_2)}$}
\RightLabel{$(\Rightarrow\exists)^3$}
\UnaryInfC{$\Rightarrow \SpV{q}_1 \,|\, \SpV{q}_1 \Rightarrow \exists x \forall y \exists z B(x,y,z) \,|\, \SpV{q}_1 \Rightarrow \boldsymbol{\exists z B(t_1,a_1,z)}$}
\RightLabel{$(\Rightarrow\forall)^3$}
\UnaryInfC{$\Rightarrow \SpV{q}_1 \,|\, \SpV{q}_1 \Rightarrow \exists x \forall y \exists z B(x,y,z) \,|\, \SpV{q}_1 \Rightarrow \boldsymbol{\forall y \exists z B(t_1,y,z)}$}
\RightLabel{$(\Rightarrow\exists)^3$}
\UnaryInfC{$\Rightarrow \boldsymbol{\exists x \forall y \exists z B(x,y,z)}$}
\DisplayProof

\caption{The $\GthreeLukA$-proof search tree $\widetilde{D}_3$}
\label{fig:GthreeProofSearchTree}
\end{center}
\end{figure}
\begin{figure}[!t]
\begin{center} \small
\def\ScoreOverhang{0pt}
\AxiomC{$\genfrac{}{}{0pt}{}{\displaystyle \Rightarrow \exists x \forall y \exists z B(x,y,z) \,| \Rightarrow \exists z B(t_3,a_2,z) \,| \Rightarrow B(t_3,a_2,t_5) \,| \Rightarrow B(t_3,a_2,t_4)}{\displaystyle \,| \Rightarrow \exists z B(t_1,a_1,z) \,| \Rightarrow B(t_1,a_1,t_2)}$}
\RightLabel{$(\Rightarrow\exists)^2$}
\UnaryInfC{$\genfrac{}{}{0pt}{}{\displaystyle \Rightarrow \exists x \forall y \exists z B(x,y,z) \,| \Rightarrow \boldsymbol{\exists z B(t_3,a_2,z)} \,| \Rightarrow B(t_3,a_2,t_4)}{\displaystyle \,| \Rightarrow \exists z B(t_1,a_1,z) \,| \Rightarrow B(t_1,a_1,t_2)}$}
\RightLabel{$(\Rightarrow\exists)^2$}
\UnaryInfC{$\Rightarrow \exists x \forall y \exists z B(x,y,z) \,| \Rightarrow \boldsymbol{\exists z B(t_3,a_2,z)} \,| \Rightarrow \exists z B(t_1,a_1,z) \,| \Rightarrow B(t_1,a_1,t_2)$}
\RightLabel{$(\Rightarrow\forall)^2$}
\UnaryInfC{$\Rightarrow \exists x \forall y \exists z B(x,y,z) \,| \Rightarrow \boldsymbol{\forall y \exists z B(t_3,y,z)} \,| \Rightarrow \exists z B(t_1,a_1,z) \,| \Rightarrow B(t_1,a_1,t_2)$}
\RightLabel{$(\Rightarrow\exists)^2$}
\UnaryInfC{$\Rightarrow \boldsymbol{\exists x \forall y \exists z B(x,y,z)} \,| \Rightarrow \exists z B(t_1,a_1,z) \,| \Rightarrow B(t_1,a_1,t_2)$}
\RightLabel{$(\Rightarrow\exists)^2$}
\UnaryInfC{$\Rightarrow \exists x \forall y \exists z B(x,y,z) \,| \Rightarrow \boldsymbol{\exists z B(t_1,a_1,z)}$}
\RightLabel{$(\Rightarrow\forall)^2$}
\UnaryInfC{$\Rightarrow \exists x \forall y \exists z B(x,y,z) \,| \Rightarrow \boldsymbol{\forall y \exists z B(t_1,y,z)}$}
\RightLabel{$(\Rightarrow\exists)^2$}
\UnaryInfC{$\Rightarrow \boldsymbol{\exists x \forall y \exists z B(x,y,z)}$}
\DisplayProof

\caption{The $\GtwoLukA$-proof search tree $\widetilde{D}_2$}
\label{fig:GtwoProofSearchTree}
\end{center}
\end{figure}

From $\widetilde{D}_3$ we can construct 
the $\GtwoLukA$-proof search tree $\widetilde{D}_2$ 
given in Figure \ref{fig:GtwoProofSearchTree}
by starting with the hypersequent \,${\Rightarrow A}$\, and applying 
the rules ${(\Rightarrow\exists)^2}$ and ${(\Rightarrow\forall)^2}$ backward,
according to how
the rules ${(\Rightarrow\exists)^3}$ and ${(\Rightarrow\forall)^3}$ 
are applied backward in $\widetilde{D}_3$ 
(such a correspondence between rule applications 
is natural and is not described for brevity).
Let $\mathcal{H}_2$ be the top hypersequent in $\widetilde{D}_2$;
and $\widetilde{\mathcal{H}}_2$ be the hypersequent consisting of all
quantifier-free sequents of $\mathcal{H}_2$. 

To complete our proof in the case being considered, it remains to show that
\,${\vdash_{\GtwoLukA} \mathcal{H}_2}$.
For this, it is sufficient to establish that
\,${\vDash \widetilde{\mathcal{H}}_3}$ implies
\,${\vDash \widetilde{\mathcal{H}}_2}$.
Indeed, \,${\vdash_{\GthreeLukA} \widetilde{\mathcal{H}}_3}$\, and
the soundness of $\GthreeLukA$ (see Theorem \ref{GTh:CorrGthree}) 
guarantee that \,${\vDash \widetilde{\mathcal{H}}_3}$.
If we prove that the latter implies 
\,${\vDash \widetilde{\mathcal{H}}_2}$,\, 
then first we will obtain \,${\vdash_{\GtwoLukA} \widetilde{\mathcal{H}}_2}$\,
by the completeness of $\GtwoLukA$ for quantifier-free hypersequents
(see Proposition 14 in \cite{Ger2017}), 
and next we will get \,${\vdash_{\GtwoLukA} \mathcal{H}_2}$\, 
because a rule similar to the rule $\text{(ew)}^3$ in Lemma \ref{GLem:AtEwSplitAdmGthree}
is admissible for $\GtwoLukA$.

The hypersequent $\widetilde{\mathcal{H}}_2$ has the form:
$${\Rightarrow B(t_3,a_2,t_5) \,| \Rightarrow B(t_3,a_2,t_4) \,| \Rightarrow B(t_1,a_1,t_2)}\,;$$
and the hypersequent $\widetilde{\mathcal{H}}_3$ has the form:
\begin{gather*}
 \Rightarrow \SpV{q}_1 \,|\, \SpV{q}_1 \Rightarrow \SpV{q}_3 \,|\, \SpV{q}_3 \Rightarrow \SpV{q}_4
 \,|\, \SpV{q}_4 \Rightarrow \SpV{q}_5 \,|\, \SpV{q}_5 \Rightarrow B(t_3,a_2,t_5) \\
 |\, \SpV{q}_4 \Rightarrow B(t_3,a_2,t_4) \,|\, \SpV{q}_1 \Rightarrow \SpV{q}_2 \,|\, \SpV{q}_2 \Rightarrow B(t_1,a_1,t_2)\,.
\end{gather*}

For a hypersequent $\mathcal{H}$, we write \,${\nvDash \mathcal{H}}$
to denote that $\mathcal{H}$ is not valid.

The condition \,${\nvDash \widetilde{\mathcal{H}}_2}$ is equivalent to 
the existence of an interpretation $M_2$ and a valuation $\nu_2$ such that
these three inequalities hold:
$${1>|B(t_3,a_2,t_5)|_{M_2,\nu_2}}, \quad
{1>|B(t_3,a_2,t_4)|_{M_2,\nu_2}},   \quad
{1>|B(t_1,a_1,t_2)|_{M_2,\nu_2}}.$$

The condition \,${\nvDash \widetilde{\mathcal{H}}_3}$ is satisfied iff
there exist an interpretation $M_3$ and a valuation $\nu_3$ for which 
all these inequalities hold: 
\begin{gather*}
1 > |\SpV{q}_1|_{M_3,\nu_3} > |\SpV{q}_3|_{M_3,\nu_3} > |\SpV{q}_4|_{M_3,\nu_3} > |\SpV{q}_5|_{M_3,\nu_3} > |B(t_3,a_2,t_5)|_{M_3,\nu_3},\\
\phantom{xxxxxxxxxxxll} |\SpV{q}_4|_{M_3,\nu_3} > |B(t_3,a_2,t_4)|_{M_3,\nu_3},\\
|\SpV{q}_1|_{M_3,\nu_3} > |\SpV{q}_2|_{M_3,\nu_3} > |B(t_1,a_1,t_2)|_{M_3,\nu_3}. \phantom{xxxxxxxxxxxxxw}
\end{gather*}

Clearly, \,${\nvDash \widetilde{\mathcal{H}}_2}$ implies
\,${\nvDash \widetilde{\mathcal{H}}_3}$, as required in the given case.

In the general case, it is obvious that from $\widetilde{D}_3$ 
we can similarly construct a $\GtwoLukA$-proof search tree $\widetilde{D}_2$.
Then the assertion \,``${\nvDash \widetilde{\mathcal{H}}_2}$ implies 
\,${\nvDash \widetilde{\mathcal{H}}_3}$'' 
follows from the next observation, which is easily justified 
using induction on the height of the tree $\widetilde{D}_3$.

We can represent the hypersequent $\widetilde{\mathcal{H}}_3$ as
a directed acyclic graph by associating,
to each sequent member in $\widetilde{\mathcal{H}}_3$, a unique vertex and, 
to each sequent of the form ${F_1 \Rightarrow F_2}$,\, 
an edge from $F_1$ to $F_2$.
In this graph, there is exactly one source, and
all vertices corresponding to $\RPLA$-formulas are sinks.
The condition ${\nvDash \widetilde{\mathcal{H}}_3}$ is equivalent to 
the existence of an interpretation $M_3$ and a valuation $\nu_3$ such that, 
for each edge ${F_1 \Rightarrow F_2}$,\, 
the inequality ${|F_1|_{M_3,\nu_3} > |F_2|_{M_3,\nu_3}}$ holds and 
so does the inequality ${1 > |F|_{M_3,\nu_3}}$ for the source $F$ of the graph.
\end{proof}

\begin{theorem} 
\label{Gth:UndecGthreeProvProbl}
Let \,$\mathfrak{S}$ be a signature such that 
the validity problem for existential sentences of classical logic
over $\mathfrak{S}$ is undecidable.
Then the $\GthreeLukA$-provability problem for existential \,$\LukA$-sentences 
over $\mathfrak{S}$ is undecidable. 
\end{theorem}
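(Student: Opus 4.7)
The plan is to deduce the undecidability of $\GthreeLukA$-provability by transferring the corresponding undecidability already available for $\GoneLukA$ and $\GtwoLukA$ in \cite{Ger2017}. In \cite{Ger2017} a computable translation $\Phi \mapsto \Phi^*$ is provided that sends each existential sentence $\Phi$ of classical logic over $\mathfrak{S}$ to an existential $\LukA$-sentence $\Phi^*$ over the same signature such that $\Phi$ is classically valid if and only if $\vdash_{\GoneLukA} \Phi^*$ (equivalently, $\vdash_{\GtwoLukA} \Phi^*$). Hence, under our hypothesis on $\mathfrak{S}$, the $\GoneLukA$- and $\GtwoLukA$-provability problems for existential $\LukA$-sentences over $\mathfrak{S}$ are already undecidable.

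To pass from this to $\GthreeLukA$, I would invoke Theorem~\ref{GTh:GoneGtwoGthreeEquivForPrenexFormulas}: since every existential $\LukA$-sentence is in particular a prenex $\RPLA$-sentence, the theorem gives us $\vdash_{\GoneLukA} \Phi^* \iff \vdash_{\GtwoLukA} \Phi^* \iff \vdash_{\GthreeLukA} \Phi^*$ for each $\Phi^*$ produced by the translation. Composing with the reduction from \cite{Ger2017} yields $\Phi$ classically valid $\iff$ $\vdash_{\GthreeLukA} \Phi^*$, which is a many-one reduction from classical existential validity over $\mathfrak{S}$ to $\GthreeLukA$-provability of existential $\LukA$-sentences over $\mathfrak{S}$. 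Undecidability of the source problem immediately transfers to the target.

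The only substantive thing to check is that the construction $\Phi \mapsto \Phi^*$ from \cite{Ger2017} stays within the syntactic class to which Theorem~\ref{GTh:GoneGtwoGthreeEquivForPrenexFormulas} applies, i.e., that $\Phi^*$ is indeed prenex (with only existential quantifiers) and uses only the primitives of $\LukA$; inspecting the cited construction, this is immediate. With that verified, no further ingredient beyond the equivalence theorem is needed, and the theorem follows by composition of the two reductions.
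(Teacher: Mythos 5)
Your proposal is correct and follows essentially the same route as the paper: the paper likewise cites the undecidability of the $\GtwoLukA$-provability problem for existential $\LukA$-sentences (Theorem~21 of \cite{Ger2017}) and transfers it to $\GthreeLukA$ via Theorem~\ref{GTh:GoneGtwoGthreeEquivForPrenexFormulas}. Your extra unpacking of the underlying reduction $\Phi \mapsto \Phi^*$ is harmless detail that the paper leaves implicit inside the citation.
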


\begin{proof}
By Theorem 21 in \cite{Ger2017}, 
the corresponding problem for $\GtwoLukA$ is undecidable; 
and the result follows by Theorem \ref{GTh:GoneGtwoGthreeEquivForPrenexFormulas}.
\end{proof}

\section{Conclusion}    

For the logics $\LukA$ and $\RPLA$, we presented the hypersequent 
calculus $\GthreeLukA$, whose rules are repetition-free and hp-invertible.

Theorem \ref{GTh:GOneProvGthreeProv} established above and 
Theorem 4 and Proposition 11 both given in \cite{Ger2017} 
ensure that any $\GLukA$-, $\GoneLukA$-, or $\GtwoLukA$-provable hypersequent
is provable in $\GthreeLukA$.
By Theorem \ref{GTh:GoneGtwoGthreeEquivForPrenexFormulas} in the present paper,
any prenex $\RPLA$-formula is provable or unprovable in 
$\GoneLukA$, $\GtwoLukA$, and $\GthreeLukA$ simultaneously.
From Theorem \ref{GTh:GoneGtwoGthreeEquivForPrenexFormulas} stated above and
Theorem 17 given in \cite{Ger2017}, 
it follows that any prenex $\LukA$-formula is $\GLukA$-provable 
iff it is $\GthreeLukA$-provable.

In essentially the same manner as in \cite[Section~4]{Ger2017},
we can formulate 
a free-va\-riable tableau modification $\TthreeLukA$ of the calculus $\GthreeLukA$
and describe a family of $\TthreeLukA$-proof search algorithms parameterized
by a fair tactic.
Then Theorem \ref{GTh:RearrangingGthreeProofAccordTactics}
(on constructing $\GthreeLukA$-proofs according to fair tactics)
will allow us to establish that any algorithm of the family 
constructs some $\TthreeLukA$-proof for any $\GthreeLukA$-provable sentence
(and so for any $\GLukA$-provable sentence).

Among problems for further research are the following.

1. Find out whether every $\LukA$-sentence (resp. $\RPLA$-sentence)
provable in $\GthreeLukA$ is provable in $\GLukA$ (resp. in $\GtwoLukA$). 

2. Investigate how complexity of formal proofs varies in passages from
one of the calculi mentioned to another.

3. Describe a nontrivial class $\mathbb{C}$ of hypersequent calculi 
in syntactic terms,
with every calculus of $\mathbb{C}$ having the proof-theoretic properties
established for $\GthreeLukA$. 
Cf., e.g., \cite{Nigam2016}, 
which gives sufficient conditions for several properties
of some sequent calculi, in particular, for invertibility of inference rules.

4. Develop a method for obtaining sound calculi of the class $\mathbb{C}$,
for first-order many-valued logics meeting some semantic conditions.
Cf. \cite{Bongini2016}, 
which solves a somewhat similar problem for a certain class of
propositional many-valued logics.

\bigskip

\newpage
\appendix
\section{\textnormal{Errata to the article \texorpdfstring{ 
  ``Infinite-valued first-order {\L}ukasiewicz logic: hypersequent calculi
  without structural rules and proof search for sentences in the prenex form''
  by A.~S.~Gerasimov, \emph{Siberian Advances in Mathematics}, Vol.~28, 
  No.~2 (2018), pp.~79--100 (\url{https://doi.org/10.3103/S1055134418020013})
 }{[7]}
}}
\label{secApp:Errata}

\newcommand{\ExtLukA}{\textnormal{F}\forall}

The above article is an English translation of the Russian article
published in \emph{Matematicheskie Trudy}, Vol.~20, No.~2 (2017), pp.~3--34
(\url{http://www.mathnet.ru/rus/mt321}).
Below the author corrects the most misleading inaccuracies 
introduced by a translator.

\begin{description}
\item[Page 79, line 14 from bottom]
``(3)~The Gentzen type sequent calculus $\ExtLukA$ for the logic of fuzzy inequalities''
\emph{should be}
``(3)~The Gentzen type sequent calculus for the logic of fuzzy inequalities $\ExtLukA$''.\footnote{
  In the (Russian) original: \selectlanguage{russian}
  ``(3)~Секвенциальное исчисление генценовского типа для логики нечетких неравенств 
  $\ExtLukA$ [1; 2], которая является расширением $\RPLA$.''
}

\item[Page 80, line 1 from bottom]
``the premise'' \emph{should be} ``each premise''.\footnote{
  In the original: \selectlanguage{russian}
  ``любая посылка правила вывода содержит заключение этого правила''.
}

\item[Page 82, line 10 from top]
``occurrence'' \emph{should be} ``repetition''.\footnote{
  In the original: \selectlanguage{russian}
  ``мультимножества $\Gamma$ и $\Delta$ повторяются в посылке.''
}

\item[Page 82, line 12 from top]
``false'' \emph{should be} ``unsound''.\footnote{
  In the original: \selectlanguage{russian}
  ``некорректного правила''.
}

\item[Page 83, lines 11--12 from bottom]
``We replace each proper parameter occurring in $\mathscr{G}$ 
and each proper semipropositional variable in $D$'' \emph{should be} 
``In $D$, we replace all proper parameters and 
proper semipropositional variables of $D$ occurring in $\mathscr{G}$''.\footnote{
  In the original: \selectlanguage{russian}
  ``$\GoneLukA$-вывод $D'$ получим, переименовав в $D$ 
  все входящие в $\mathscr{G}$ собственные параметры и
  собственные полупропозициональные переменные вывода $D$ 
  на новые попарно различные.''
}

\item[Page 84, line 17 from top]
``We claim that assertion (1) is equivalent to the following equivalent conditions:'' 
\emph{should be} ``Then assertion (1) is equivalent to the following:''.\footnote{
  In the original: \selectlanguage{russian}
  ``Тогда утверждение (1) эквивалентно следующему:''.
}

\item[Page 86, lines 8, 12, and 19 from top]
``sequence'' \emph{should be} ``sequent''.\footnote{
  In the original: \selectlanguage{russian}
  ``секвенция'' (в соответствующем падеже и числе).
}

\item[Page 87, lines 16 and 18 from top]
``consists of'' \emph{should be} ``contains''.\footnote{
  In the original: \selectlanguage{russian}
  ``содержит''.
}

\item[Page 89, line 13 from bottom]
``distinguished occurrences'' \emph{should be} ``the distinguished occurrence''.\footnote{
  In the original: \selectlanguage{russian}
  ``выделенного в посылке $(\star)$ 
  вхождения $\Rightarrow \forall x B$''.
}

\item[Page 90, line 2 from top]
``starting from the root of $D$'' \emph{should be} 
``in order of increasing their distances from the root of $D$''.\footnote{
  In the original: \selectlanguage{russian}
  ``в порядке удаления от корня дерева $D$''.
}

\item[Page 91, lines 23--24 from top]
``propositional logic'' \emph{should be} ``propositional classical logic''.\footnote{
  In the original: \selectlanguage{russian}
  ``пропозициональной классической логики''.
}

\item[Page 93, lines 3, 23, and 26 from top]
``table'' \emph{should be} ``tableau''.\footnote{
  In the original: \selectlanguage{russian}
  ``таблица'' (в соответствующем падеже). \selectlanguage{english}
  In all other places in the article, 
  this word is correctly translated by ``tableau''.
}

\item[Page 94, line 22 from top]
``system'' \emph{should be} ``system $\mathscr{S}_{\mathscr{H}_i}$''.\footnote{
  In the original: \selectlanguage{russian}
  ``систему $\mathscr{S}_{\mathscr{H}_i}$''.
}

\item[Page 95, line 20 from top]
\emph{Remove} ``otherwise,''.\footnote{
  In the original: \selectlanguage{russian}
  ``(3) закончить с ответом <<непревращаемы>>''.
}

\item[Page 95, line 31 from top]
``literals'' \emph{should be} ``distinct literals''.\footnote{
  In the original: \selectlanguage{russian}
  ``ровно 3 различных литерала''.
}

\item[Page 96, line 4 from top]
``expressed'' \emph{should be} ``bounded'' 
(\emph{or, to be closer to the original,}
``can be expressed by a polynomial'' \emph{should be} ``is polynomial'').\footnote{
  In the original: \selectlanguage{russian}
  ``выполняет полиномиальное от длины входа число операций''.
}

\item[Page 97, line 8 from bottom]
``The premise of each'' \emph{should be} ``Each premise of a''.\footnote{
  In the original: \selectlanguage{russian}
  ``каждая посылка контприменения''.
}

\item[Page 98, line 10 from top]
``Informally'' \emph{should be} ``Otherwise, informally''.\footnote{
  In the original: \selectlanguage{russian}
  ``Иначе мы, грубо говоря, вставим\ldots''
}

\item[Page 98, line 12 from top]
``distinguished occurrences'' \emph{should be} 
``the distinguished occurrence''.\footnote{
  In the original: \selectlanguage{russian}
  ``выделенного вхождения секвенции $\Rightarrow \forall x B$''.
}

\item[Page 99, line 11 from bottom]
\emph{Remove} ``, Vol. 2''.\footnote{
  In the original, no volume is mentioned intentionally
  (i.e., both volumes are referred to.)
}
\end{description}

\bigskip\bigskip
\end{document}